\title{The Truth Behind the Myth of the Folk Theorem}
\author{
Joseph Y. Halpern \qquad \qquad Rafael Pass \qquad \qquad  Lior Seeman\\ 
Computer Science Dept.\\
Cornell University\\
Ithaca, NY\\
E-mail: halpern$|$rafael$|$lseeman@cs.cornell.edu
}
\newtheorem{theorem}{Theorem}[section]
\newtheorem{cor}[theorem]{Corollary}
\newtheorem{lemma}[theorem]{Lemma}
\newtheorem{definition}[theorem]{Definition}
\newtheorem*{theorem*}{Theorem}
\newtheorem*{corollary*}{Corollary}
\newtheorem*{conjecture*}{Conjecture}
\newtheorem*{lemma*}{Lemma}
\newtheorem*{thm*}{Theorem}
\newtheorem*{prop*}{Proposition}
\newtheorem*{obs*}{Observation}
\newtheorem*{rem*}{Remark}
\newtheorem*{definition*}{Definition}
\newcommand{\eps}{{\epsilon}}
\newcommand{\Gen}{{\text{Gen}}}
\newcommand{\Enc}{{\text{Enc}}}
\newcommand{\Dec}{{\text{Dec}}}
\newcommand{\PPT}{{\text{PPT}}}
\begin{document}
\date{}
\maketitle

\begin{abstract}
We study the problem of computing an $\epsilon$-Nash equilibrium in
repeated games. 
Earlier work by Borgs et al.~\citeyear{borgs2010myth} suggests that this
problem is intractable.  We show that if we make a slight change to
their model---modeling the players as polynomial-time Turing machines
that maintain state ---and make some standard cryptographic hardness assumptions
(the existence of public-key encryption), the problem can actually be
solved in polynomial time. 
Our algorithm works not only for games with a finite number of players, but
also for constant-degree graphical games. 

As Nash equilibrium is a weak solution
concept for extensive form games, we additionally define and study an appropriate notion of a
subgame-perfect equilibrium for computationally bounded players, and
show how to efficiently find such an equilibrium in repeated games
(again, making standard cryptographic hardness assumptions). 

\end{abstract}

\section{Introduction}
The complexity of finding a Nash equilibrium (NE) is a fundamental
question at the interface of game theory and computer science. A 
celebrated sequence of results showed that the complexity of finding a NE in a
normal-form game is PPAD-complete
\cite{chen2006settling,daskalakis2006complexity}, even for 2-player
games.  
Less restrictive concepts, such as \mbox{$\epsilon$-NE} for an
inverse-polynomial $\epsilon$, are just as hard
\cite{chen2006computing}.  This suggests that these problems are
computationally intractable.

There was some hope that the situation would be better in
infinitely-repeated games.  
The \emph{Folk Theorem} (see \cite{OR94} for a review) informally states that 
in an infinitely-repeated game $G$, 
for any payoff profile that is \emph{individually rational}, in that all 
players get more than\footnote{For our results, since we consider
$\epsilon$-NE, we can replace ``more than'' by ``at least''.} their minimax
payoff  (the highest payoff 
that a player can guarantee himself, no matter what the other players do)
and is the outcome of some correlated strategy in $G$,
there is a Nash equilibrium of $G$ with this payoff profile.
With such a large set of equilibria, the hope was that finding one would
be less difficult.
Indeed, Littman and Stone  \citeyear{littman2005polynomial} showed that
these ideas can be used to design an algorithm for finding a NE in a
two-player repeated game.  

Borgs et al. \citeyear{borgs2010myth} (BC+ from now on) proved some
results suggesting that, 
for more than two players,
even in infinitely-repeated games it would be
difficult to find a NE.  Specifically, they showed that, under certain
assumptions, the problem of finding a NE (or even an $\epsilon$-NE for an
inverse-polynomial $\epsilon$) in an infinitely repeated game with three
or more players where there is a discount factor bounded away
from $1$ by an inverse polynomial is also PPAD-hard. 
They prove this by showing that, given
an arbitrary normal-form game $G$ with $c \ge 2$ players, there is a
game $G'$ with $c+1$ players 
such that finding an $\epsilon/8c$-NE for the repeated game based on
$G'$ is equivalent to finding an $\epsilon$-NE for $G$.

While their proof is indeed correct, in this paper, we challenge their
conclusion.  
%Not surprisingly, we do this by changing their assumptions in what we argue are natural ways.  
If we take seriously the importance of being able to find an
$\epsilon$-NE efficiently, it is partly because we have computationally bounded
players in mind.  But then it seems reasonable to see what happens if we
assume that the players in the game are themselves
computationally bounded. 
Like BC+, we assume that players are
resource bounded.%
\footnote{Although BC+ do not discuss modeling players in this way, the
problem they show is NP-Hard is to find a polynomial-time TM 
profile that implements an equilibrium. There is an obvious exponential-time TM profile that
implements an equilibrium: each TM in the profile just computes the
single-shot NE and plays its part repeatedly.}
Formally, we view players as probabilistic%
\footnote{BC+ describe their TMs as deterministic, but allow them to
output a mixed 
strategy. As they point out, there is no difference between this
formulation and a probabilistic TM that outputs a specific action;
their results hold for such probabilistic TMs as well.} 
polynomial-time 
Turing machines (PPT TMs).   
We differ from BC+ in two key respects.
First, since we restrict to (probabilistic) polynomial-time players, we restrict the deviations that can be made in equilibrium to those that can be computed
by such players; BC+ allow arbitrary deviations.
Second, BC+ implicitly assume that players have no 
memory: they cannot remember computation from earlier rounds.  By way of contrast, we allow players 
to have a bounded (polynomial) amount of memory.
This allows players to remember the results of a few coin tosses from
earlier rounds, and means that we can use some cryptography (making some
standard cryptographic assumptions) to try to coordinate the players.  
We stress that this coordination happens in the process of the game
play, not through communication.  
That is, there are no side channels; the only form of ``communication''
is by making moves in the game.
We call such TMs \emph{stateful}, and the BC+ TMs \emph{stateless}.
We note, that without the restriction on deviations, there is no real difference between
stateful TMs and stateless TMs in our setting (since a player with unbounded computational power can
recreate the necessary state).
With these assumptions (and the remaining assumptions of the BC+ model), we show
that in fact an $\epsilon$-NE in an infinitely-repeated game
can be found in polynomial time.

Our equilibrium strategy uses threats
and punishment much in the same way that they are used in the Folk
Theorem. However, since the players are computationally bounded we can
use cryptography (we assume the existence of a secure public key
encryption scheme) to secretly correlate the punishing players. This
allows us to overcome the difficulties raised by BC+. 
Roughly speaking, the $\epsilon$-NE can be described as proceeding in
three stages.  
In the first stage, the players play a sequence of predefined actions
repeatedly. If some player 
deviates from the sequence, the second stage begins, in which the other
players use their actions to secretly exchange a random seed, through the use of public-key encryption.
In the third stage, the
players use a correlated minimax strategy to punish 
the
deviator forever.
To achieve this correlation, the players use the
secret random seed as the seed of a pseudorandom function, and use the
outputs of the pseudorandom function as the source of randomness for the
correlated strategy. 
Since the existence of public-key encryption implies the existence of pseudorandom
functions, the only cryptographic assumption needed is the existence
of public-key encryptions---one of the most basic cryptographic
hardness assumptions.

In the second part of the paper we show how to extend this result to a more refined solution concept. While NE has some attractive features, it allows some unreasonable
solutions.  In particular, the
equilibrium might be obtained by what are arguably empty threats.  This
actually happens in our proposed NE (and in the
basic version of the folk theorem).  Specifically, players are required
to punish a  
deviating player, even though that might hurt their payoff. Thus,
if a deviation occurs, it might not be the best response of
the players to follow their strategy and punish; thus, such
a punishment is actually an empty threat.  

To deal with this (well known) problem, a number of refinements of NE
have been considered.  The one typically used in dynamic games of perfect
information is \emph{subgame-perfect equilibrium}, suggested by
Selten~\citeyear{Selten65}. A
strategy profile is a subgame-perfect equilibrium if it is a NE at every
subgame of the original game. Informally, this means that at any history
of the game (even those that are not on any equilibrium path), if all
the players follow their strategy from that point on, then no player has
an incentive to deviate. In the context of repeated games 
where players' moves are observed (so that it is a game of perfect
information), 
the folk theorem continues to hold even if the solution
concept used is subgame-perfect
equilibrium~\cite{AS94,fudenberg1986folk,Ru79}. 

%joe1: we need to  slow down here a little
%We show that we can efficiently compute a   
%\emph{computational subgame-perfect $\eps$-equilibrium}.   
%(The ``computational'' here means that we restrict
%deviating players to using polynomial-time strategies.) 
%There are a number of subtleties that arise in making this precise.
We define a computational analogue of subgame-perfect equilibrium that we call
\emph{computational subgame-perfect $\eps$-equilibrium}, where the
strategies involved are polynomial time, and 
deviating players are again restricted to using polynomial-time
strategies. 
There are a number of subtleties that arise in defining this notion.
While we assume that all actions in the underlying repeated game are
observable,  
%joe1
%we allow our TMs to also have memory, which means their action does
we allow our TMs to also have memory, which means the action of A TM does
not depend only on the public history. 
%joe1
%Like subgame-perfect equilibrium, we would like our solution concept
%to capture the intuition that the strategies are in equilibrium after
%lior1
%Like subgame-perfect equilibrium, we computational solution concept
Like subgame-perfect equilibrium, our computational solution concept
is intended to capture the intuition that the strategies are in
equilibrium after any possible 
deviation. This means that in a computational subgame-perfect  
equilibrium, at each history for player $i$, player $i$ must make a
(possibly approximate) best response, no matter what his and the other
players' memory states are. 
To compute a computational subgame-perfect $\epsilon$-equilibrium, 
we use the same basic strategy as for NE,
but, as often done to get a subgame-perfect equilibrium (for example
see~\cite{fudenberg1986folk}), we limit the punishment phase length,
so that the players are not incentivized not to punish deviations. 
However, to prove our result, we need to overcome one more significant hurdle.
When using cryptographic protocols, it is often the case 
(and, specifically is the case in the protocol used for NE) that player $i$
chooses a secret (e.g., a secret key for a public-key encryption scheme)
as the result of some 
randomization, and then releases some public information which is a
%joe1
%function of it (e.g., a public key).  After 
function of the secret (e.g., a public key).  After 
that public information has been released, another party $j$ typically has a profitable
deviation by switching to the TM $M$ that can break the 
protocol---for every valid public information, there always \emph{exists} some TM $M$ that has the secret ``hardwired'' into it (although there may not be an efficient way of finding $M$ given the information).
We deal with this problem by doing what is often done in practice:
we do not use any key for too long, so that $j$ cannot gain too much by
knowing any one key.

A second challenge we face is that in order to prove that our new proposed
strategies are even an $\epsilon$-NE, we need to show that the
payoff of the \emph{best response} to this strategy is not much
greater than that of playing the strategy. However, since 
for any polynomial-time TM there
is always a better polynomial-time TM that has just a slightly longer
running time, this natural approach fails. This instead leads us to
characterize a class of TMs we can analyze, and show that any other TM
can be converted to a TM in this class that has at least the same
payoff. 
While such an argument might seem simple in the traditional setting,
since we only allow for polynomial time TMs, in our setting this turns
out to require a 
surprisingly delicate construction and analysis to make sure this
converted TM does indeed has the correct size and running time.

%\subsection{Related work}
The idea of considering resource-bounded agents has a long history in
game theory.  It is known, for example, that cooperation is a NE of
finitely-repeated prisoner's dilemma with resource-bounded players (see,
e.g., \cite{neyman1985,rubinstein1986finite,PY94}).
The idea of using the structure of the game as a means of correlation is
used by Lehrer \citeyear{lehrer1991internal} to show an equivalence
between NE and 
correlated equilbrium in certain repeated games with nonstandard
information structures. 
The use of cryptography in game theory goes back to Urbano and Vila
\citeyear{urbanoVilla2002,urbano2004unmediated}, who also used it to do
coordination between players.  More recently, it has been 
used by, for example, Dodis, Halevi, and Rabin
\citeyear{dodis2000cryptographic}.  

The application of cryptography perhaps most closely related to ours is
by Gossner \citeyear{gossner1998repeated}, who uses cryptographic
techniques to show how any payoff profile that is 
above the players' \emph{correlated} minimax value can be achieved in a
NE of a repeated game with public communication played by computationally
bounded players. In \cite{gossner2000sharing}, a strategy similar to
the one that we use is used to prove that, even without
communication, the same result holds. 
Gossner's results apply only to infinitely-repeated games with 3 players
and no discounting; he claims that his results do not hold for games
with discounting.  
Gossner does not discuss the complexity of finding a strategy of
the type that he shows exists.

Recently, Andersen and Conitzer \citeyear{fast2013Andersen} described an
algorithm for finding NE in repeated games with more than two players
with high probability in
\emph{uniform games}. However, this algorithm is
not guaranteed to work for all games, and uses the limit of means as its
payoff criterion, and not discounting. 

There are a few recent papers that investigate solution concepts for
extensive-form games involving computationally bounded
player~\cite{KN08a,GLR13,HP13}; some of these focus on
cryptographic protocols~\cite{KN08a,GLR13}. 
Kol and Naor~\citeyear{KN08a} discuss refinements of NE in the 
context of cryptographic protocols, but their solution concept 
requires only that on each history on the equilibrium path, the
strategies from that point on form 
a NE. 
Our requirement for the computational subgame-perfect equilibrium is much stronger.
Gradwohl, Livne and Rosen~\citeyear{GLR13} also consider this
scenario and offer a solution concept different from ours; they try to
define when an empty threat occurs, and look for strategy profiles where
no empty threats are made. Again, our solution concept is much
stronger.  

The rest of this paper is organized as follows.  In
Section~\ref{sec:preliminaries}, we review the relevant definitions from  
game theory and cryptography.
In section~\ref{sec:NE}, we define our notion of computational $\epsilon$-NE and show how find it efficiently for repeated games.
%joe1
%In Section~\ref{sec:perfecteq}, we discuss an extension for
In Section~\ref{sec:perfecteq}, we consider
computational subgame-perfect $\epsilon$-equilibrium 
%joe1: added
%lior1
%and show that it oo can be found efficiently.
and show that it too can be found efficiently.

\section{Preliminaries}\label{sec:preliminaries}
\subsection{One-shot games}
\newcommand{\mm}{\mathit{mm}}
We define a game $G$ to be a triple $([c],A,\vec{u})$, where
$[c] = \{1,\ldots, c\}$ is the set of players, $A_i$ is the
set of possible actions for player $i$, 
$A=A_1\times\ldots\times A_c$
is the set of action profiles, and 
$\vec{u}:A \to \mathbb{R}^c$ is the utility function
($\vec{u}_i(\vec{a})$ is the utility of player
$i$). A (mixed) \emph{strategy} $\sigma_i$ for player $i$ is a
probability distribution over $A_i$, that is, an element of 
$\Delta(A_i)$ (where, as usual, we denote by $\Delta(X)$ the set of
probability distributions over the set $X$). 
We use the standard notation $\vec{x}_{-i}$ to denote vector $\vec{x}$ with
its $i$th element removed, and $(x',\vec{x}_{-i})$ to denote
$\vec{x}$ with its $i$th element replaced by $x'$. 

\begin{definition} (Nash Equilibrium)
$\sigma=(\sigma_1,...,\sigma_c)$ is an $\epsilon$-NE of $G$ if, for all
players $i\in [c]$ and all actions $a_i'\in A_i$, 
$E_{\sigma_{-i}}[u_i(a_i',\vec{a}_{-i})]\leq  
E_{\sigma}[u_i(\vec{a})]+\epsilon. $
\end{definition}

A \emph{correlated strategy} of a game $G$ is an element
\mbox{$\sigma\in\Delta(A)$}. It is a \emph{correlated equilibrium}
if, for all players $i$, they have no temptation to play a different
action, given that the action profile was chosen according to $\sigma$.
That is, for all players $i$ for all $a_i \in A_i$ such that $\sigma_i(a_i)
> 0$, $E_{\sigma \mid a_i}u_i(a_i,\vec{a}_{-i}) \ge
E_{\sigma \mid a_i}u_i(a_i',\vec{a}_{-i})$.

Player $i$'s minimax value in a game $G$ is
the highest payoff $i$ can guarantee himself if the other
players are trying to push his payoff as low as possible. We call the
strategy $i$ plays in this case a minimax strategy for $i$; the
strategy that the other players use is $i$'s (correlated) punishment
strategy.  
(Of course, there could be more than one minimax strategy or punishment
strategy for player $i$.)  
Note that a correlated punishment strategy can be computed using linear programming.

\begin{definition} 
Given a game $G=([c],A,\vec{u})$, 
the strategies $\vec{\sigma}_{-i} \in \Delta(A_{-i})$ that minimize
$\max_{\sigma'\in\Delta{(A_i)}}E_{(\sigma',\vec{\sigma}_{-i})}[u_i(\vec{a})]$  
are the \emph{punishment strategies} against player $i$ in $G$. 
If $\vec{\sigma}_{-i}$ is a punishment strategy against player $i$, then
$\mm_i(G)=\max_{a\in A_i}E_{\vec{\sigma}_{-i}}[u_i(a,a_{-i})]$ is player
$i$'s \emph{minimax value} in $G$ 
\end{definition}

To simplify the presentation, we assume all payoffs are normalized so that each player's minimax value is 0.
Since, in an equilibrium, all players get at least their minimax value,
this guarantees that all players get at least 0 in a NE.

\subsection{Infinitely repeated games}

Given a normal-form game $G$, we define the repeated game
$G^{t}(\delta)$ as the game in which $G$ is played repeatedly $t$ times
(in this context, $G$ is called the \emph{stage game})
and \mbox{$1-\delta$} is the discount factor (see below). Let
$G^{\infty}(\delta)$ be the game where $G$ is played infinitely many
times. 
An infinite history $h$ in this game is an infinite sequence
$\langle\vec{a}^0,\vec{a}^1, \ldots\rangle$ of action profiles.
Intuitively, we can think of $\vec{a}^t$ as the action profile played in
the $t^{\mathrm{th}}$ stage game.  
We often omit the $\delta$ in $G^{\infty}(\delta)$
if it is not relevant to the discussion.
Let $H_{G^{\infty}}$ be the set of all possible histories of $G^{\infty}$.
For a history $h\in H_{G^{\infty}}$ let $G^{\infty}(h)$ the subgame that
starts at history $h$ (after $|h|$ one-shot games have been played where
all players played according to $h$). 
We assume that $G^\infty$ is \emph{fully observable}, in the
sense that, after each stage game, the players observe exactly what
actions the other players played.

A (behavioral) strategy for player $i$ in a repeated game is a function $\sigma$ from histories of the games to $\Delta(A_i)$. Note that a profile $\vec{\sigma}$ induces a distribution $\rho_{\vec{\sigma}}$
on infinite histories of play.
Let $\rho_{\vec{\sigma}}^t$ denote the induced distribution on  
$H^t$, the set of histories of length $t$.
(If $t=0$, we take $H^0$ to consist of 
the unique history of length 0, namely $\langle \, \rangle$.)
Player $i$'s utility if $\vec{\sigma}$ is played, denoted $p_i(\vec{\sigma})$, is
defined as follows:
\begin{equation*}
p_i(\vec{\sigma})=\delta\sum_{t=0}^\infty (1-\delta)^t
\sum_{h \in H^{t}, \vec{a} \in A} \rho_{\vec{\sigma}}^{t+1}(h \cdot \vec{a})
[u_i(\vec{a})].  
\end{equation*}
Thus, the discount factor is $1-\delta$.
Note that the initial $\delta$ is a normalization factor.  It guarantees
that if  $u_i(\vec{a}) \in [b_1,b_2]$ for all joint actions $\vec{a}$ in
$G$, then $i$'s utility is in $[b_1,b_2]$, no matter which strategy profile
$\vec{\sigma}$ is played.

In these game, a more robust solution concept is subgame-perfect
equilibrium~\cite{Selten65},
which requires that the strategies
form an $\epsilon$-NE at every history of the game.

\begin{definition} 
A strategy profile $\vec{\sigma}=(\sigma_1,...,\sigma_c)$, is
a subgame-perfect $\epsilon$-equilibrium of a repeated game
$G^{\infty}$, if, for all players   
$i\in [c]$, all histories $h\in
H_{G^{\infty}}$ 
where player $i$ moves, and all strategies $\sigma'$ for player $i$,
\begin{equation*}
p_i^h((\sigma')^h,\vec{\sigma}^h_{-i})\leq
p_i^h(\vec{\sigma}^h)+\epsilon, 
\end{equation*}
where $p_i^h$ is the utility function for player $i$ in game
${G}^{\infty}(h)$, and $\sigma^h$ is the restriction of $\sigma$ to
$G^\infty(h)$.   
\end{definition}

\subsection{Cryptographic definitions} 

For a probabilistic algorithm $A$ and an infinite bit string $r$,
$A(x; r)$ denotes the output of $A$ running on input $x$ with randomness
$r$; 
$A(x)$ denotes the distribution on outputs of $A$ induced by considering
$A(x;r)$, where $r$ is chosen uniformly at random. 
A function $\epsilon :  \mathbb{N} \rightarrow [0,1]$ is\emph{ negligible}
if, for every constant $c \in \mathbb{N}$,  $\epsilon(k) < k^{-c}$ for sufficiently large $k$.

We use a \emph{non-uniform} security model, which means our attackers are
\emph{non-uniform} PPT algorithm. 
\begin{definition}
A \emph{non-uniform} probabilistic
polynomial-time machine $A$ is a sequence
of probabilistic machines $A = \{A_1, A_2, . . .\}$ for which there exists a
polynomial $d$ such that both $|A_n|$, the \emph{description size of
  $A_n$}  (i.e., the states and transitions in $A_n$), and the 
running time of $A_n$ are less than $d(i)$. 
\end{definition}

Alternatively, a non-uniform PPT machine can also be defined as a uniform PPT machine that receives an advice string (for example, on an extra ``advice" tape) for each input length.
It is common to assume that the cryptographic building blocks we define next and use in our
constructions are secure against non-uniform PPT algorithms. 

\subsubsection{Computational Indistinguishability}

\begin{definition} A \emph{probability ensemble} is a
sequence $X=\{X_n\}_{n\in\mathbb{N}}$  of probability distribution indexed by
$\mathbb{N}$. 
(Typically, in an ensemble $X=\{X_n\}_{n\in\mathbb{N}}$, the support of $X_n$
consists of strings of length $n$.) 
\end{definition}

We now recall the definition of computational indistinguishability
\cite{goldwasser1984probabilistic}.

\begin{definition} Two probability
ensembles 
$\{X_n\}_{n\in\mathbb{N}}, \{Y_n\}_{n\in\mathbb{N}}$ are \emph{computationally indistinguishable} if, for
all non-uniform
PPT TMs $D$, there
exists a negligible function $\epsilon$ such that, for all $n \in
\mathbb{N}$,  
$$|\Pr[D(1^n, X_n) = 1] - \Pr[D(1^n, Y_n) = 1]| \leq \epsilon(n).$$
To explain the $\Pr$ in the last line, recall that $X_n$ and $Y_n$
are probability distributions. Although we write  
$D(1^n, X_n)$, $D$ is a randomized
algorithm, so what $D(1^n, X_n)$ returns depends on the outcome
of random coin tosses.  To be a little more formal, we should write
$D(1^n, X_n,r)$, where $r$ is an infinitely long random bit strong (of
which $D$ will only use a finite initial prefix).  More
formally, taking $\Pr_{X_n}$ to be the joint distribution over strings
$(x,r)$ where $x$ is chosen according to $X_n$ and r is chosen
according to the uniform distribution on bit-strings,  
we want  
$$|{\Pr}_{X_n}\left[\{(x,r):D(1^n, x,r) =
1\}\right]-{\Pr}_{Y_n}\left[\{(y,r):D(1^n, 
y,r) = 1\}\right]|\leq \epsilon(n).$$   
We similarly abuse notation elsewhere in writing $\Pr$.
\end{definition}

We often call a TM that is
supposed to distinguish between two probability ensembles a
\emph{distinguisher}.

\subsubsection{Pseudorandom Functions}
\begin{definition} A \emph{function ensemble} is a
sequence $F=\{F_n\}_{n\in\mathbb{N}}$ of probability distributions such that the
support of $F_n$ is a set of functions mapping
$n$-bit strings to $n$-bit strings.  
The \emph{uniform function ensemble}, denoted $H=\{H_n\}_{n\in\mathbb{N}}$,
has $H_n$ be the uniformly distribution over the set of all functions mapping
$n$-bit strings to $n$-bit strings. 
\end{definition}

We have the same notion of computational indistinguishablity for
function ensembles as we had for probability ensembles, only that the
distinguisher is now an oracle machine, meaning that it can query the
value of the function at any point with one computation step, although it
does not have the full description of the function. 
(See \cite{goldreichFound} for a detailed description.)

We now define \emph{pseudorandom functions} (see
\cite{goldreich1986construct}). Intuitively, 
this is a family of functions indexed by a seed, such that it is hard
to distinguish a random member of the family from a truly randomly
selected function.

\begin{definition}
A \emph{pseudorandom function ensemble (PRF)} is a set 
\mbox{$\{f_s:\{0,1\}^{|s|}\to\{0,1\}^{|s|}\}_{s\in\{0,1\}^*}$} such that the
following conditions hold: 
\begin{itemize}
\item (easy to compute) $f_s(x)$ can be computed by a PPT algorithm that
is given $s$ and $x$; 

\item (pseudorandom) the function ensemble
\mbox{$F=\{F_n\}_{n\in\mathbb{N}}$}, where $F_n$ is uniformly
distributed over the multiset $\{f_s\}_{s\in\{0,1\}^n}$, is
computationally indistinguishable from $H$. 
\end{itemize}
\end{definition}

We use the standard cryptographic assumption that 
a family of PRFs exists; 
this assumption is implied by the existence of one-way functions
\cite{haastad1999pseudorandom,goldreich1986construct}. 
We actually require the use of a seemingly stronger notion of a PRF,
which requires that an attacker getting access to 
polynomially many instances of a PRF (i.e., $f_s$ for polynomially many
values of $s$) still cannot distinguish them from polynomially many
truly random functions.
Nevertheless, as we show in Appendix~\ref{section:appA}, it follows using a standard
``hybrid'' argument that any PRF satisfies also this stronger
``multi-instance'' security notion.

\subsubsection{Public-key Encryption Schemes}
We now define public-key encryption schemes. 
Such a scheme has two keys. The first is public and used for 
encrypting messages (using a randomized algorithm).
The second is secret and used for decrypting. The keys are generated
in such a way that the probability that a decrypted message is equal
to the encrypted message is equal to $1$. 
The key generation algorithm takes as input a ``security parameter''
$k$ that is used to determine the security of the protocols
(inuitively, no polynomial-time attacker should be able to ``break'' the
security of the protocol except possibly with a probability that is a
negligible function of $k$).

We now recall the formal definitions of public-key encryption schemes
\cite{diffie1976new,rivest1978method,goldwasser1984probabilistic}.
\begin{definition} Given a polynomial $l$,
an \emph{$l$-bit public-key encryption scheme} is a triple \mbox{$\Pi = (\Gen, \Enc, \Dec)$}
of $\PPT$ algorithms where (a) $\Gen$ takes a security parameter
$1^k$ as input and returns  a (public key, private
 key) pair; (b) $\Enc$ takes a public key $pk$ 
 and a message $m$ in a message space $\{0,1\}^{l(k)}$ as input and
returns a ciphertext $\Enc_{pk}(m)$; (c) $\Dec$ is   
a deterministic  algorithm that takes a secret key $sk$ and a
ciphertext $\mathcal{C}$ as input and outputs $m' = \Dec_{sk}(\mathcal{C})$, and (d)
$$\Pr \left[\exists m \in \{0,1\}^{l(k)} \mbox{ such that } \Dec_{sk}(Enc_{pk}(m))\neq m  \right ] = 0 .$$  
\end{definition}

We next define a security notion for public-key encryption. Such a
security 
notion
considers an adversary that is characterized by two PPT algorithms, $A_1$ and
$A_2$. Intuitively, $A_1$ gets as input a public key that is part of a
(public key, secret key) pair randomly generated by \Gen, 
together with a security parameter $k$.
$A_1$ then outputs two messages in $\{0,1\}^k$ (intuitively, messages
it can distinguish), and some side information that it passes to $A_2$
(intuitively, this is information that $A_2$ needs, such as the messages
chosen; An example of how this is used can be seen in
Appendix~\ref{section:appB}).    
$A_2$ gets as input
the encryption of one of those messages
and the side information passed on by $A_1$.
$A_2$ must output which of the
two messages 
$m_0$ and $m_1$ the encrypted message
is the encryption of  
(where an output of $b \in \{0,1\}$ indicates that it is $m_b$).
Since $A_1$ and $A_2$ are PPT algorithms, 
the output of $A_2$ can be viewed as a
probability distribution over $\{0,1\}$. The scheme is secure if the two
ensembles (i.e., the one generated by this process where the encryption
of $m_0$ is always given to $A_2$, and the one where the encryption of
$m_1$ is always given to $A_2$) 
are indistinguishable. More formally: 

\begin{definition}[Public-key security] An $l$-bit
public-key encryption scheme $\Pi = (\Gen, \Enc, \Dec)$ is \emph{secure}
if, for 
every probabilistic polynomial-time adversary $A = (A_1,A_2)$, the
ensembles 
$\{\text{IND}_0^{\Pi}(A, k)\}_k$ and $\{\text{IND}_1^{\Pi}(A,k)\}_k$ 
are computationally indistinguishable, where 
$\{\text{IND}_b^{\Pi}(A,k)\}_k$ is the following PPT algorithm:
\begin{center}
\begin{tabular}[c]{rlcrl}

$\text{IND}_b^{\Pi}(A, k) :=$ & $(pk,sk) \leftarrow \Gen(1^k)$ \\
 & $(m_0, m_1, \tau) \leftarrow A_1(1^k, pk)$ $(m_0,m_1\in\{0,1\}^k)$\\
 & $\mathcal{C} \leftarrow \Enc_{pk}(m_b)$ \\
 & $o \leftarrow A_2(\mathcal{C}, \tau)$ \\
 & $\text{Output } o$.%
\end{tabular}
\end{center}

\noindent Intuitively, the $\leftarrow$ above functions as an assignment
statement, but it is not quite that, since the various algorithms are
actually PPT algorithms, so their output is randomized.  
Formally, $\text{IND}_b^{\Pi}(A, k)$ is a 
%random variable on $(\{0,1\}^*)^4$. 
probability distribution, which we can write as $\text{IND}_b^{\Pi}(A, k, r_1,
r_2,r_3,r_4)$,  where we view $r_1$, $r_2$, $r_3$, and $r_4$ as the random
bitstrings that 
serve as the
second arguments of $\Gen$, $A_1$, $\Enc_{pk}$, and $A_2$,
respectively.  Once we add these arguments 
(considering, e.g.,  $\Gen(1^k, r_1)$ and $A_1(1^k, pk,r_2)$ rather than  
$\Gen(1^k)$ and $A_1(1^k, pk)$)
these algorithms become deterministic, and $\leftarrow$ can indeed be
viewed as an assignment statement.
\end{definition}
 
We assume a secure public-key encryption scheme exists. 
We actually require a seemingly stronger notion of
``multi-instance'' security, where an attacker gets to see encryptions
of multiple messages, each of which is encrypted using multiple keys.

\begin{definition}
An $l$-bit public-key encryption scheme $\Pi = (\Gen, \Enc, \Dec)$ is 
\emph{multi-message multi-key secure} if, for all polynomials $f$ and
$g$, and for every probabilistic polynomial
time adversary \mbox{$A = (A_1,A_2)$}, the ensembles
$\{\text{IND-MULT}_0^{\Pi}(A, k,f,g)\}_k$ and $\{\text{IND-MULT}_1^{\Pi}(A,
k,f,g)\}_k$ are computationally indistinguishable, where 
\begin{center}
\noindent\begin{tabular}[c]{rlcrl}
& $\text{IND-MULT}_b^{\Pi}(A, k,f,g) :=$\\
 & $(pk_1,sk_1)
 \leftarrow \Gen(1^k),\ldots (pk_{g(k)},sk_{g(k)}) \leftarrow \Gen(1^k),$ \\ 
 & $(m_0^1,\ldots,m_0^{f(k)}, m_1^1,\ldots,m_1^{f(k)},\tau) \leftarrow A_1(1^k, pk_1,\ldots,pk_{g(k)})$ 
($m_0^i,m_1^i\in\{0,1\}^k$)\\
 & $\mathcal{C} \leftarrow \Enc_{pk_1}(m_b^1),\ldots,
 \Enc_{pk_{g(k)}}(m_b^1),\ldots,\Enc_{pk_1}(m_b^{f(k)}),\ldots,
 \Enc_{pk_{g(k)}}(m_b^{f(k)})$ \\
& $o \leftarrow A_2(\mathcal{C},\tau)$ \\
 & $\text{Output } o$ \\

\end{tabular}
\end{center}
\end{definition}

In this definition, there are polynomially many messages
being encrypted, and each message is encrypted 
a polynomial number of times, using a different key each time.
Other than that, the process is similar
to the standard definition of security.
As we show in Appendix~\ref{section:appB}, any secure encryption scheme is also
multi-message multi-key secure.

\section{The complexity of finding $\epsilon$-NE in repeated games played by stateful machines}\label{sec:NE}
\newcommand{\sq}{\mathit{sq}}
\newcommand{\seed}{\mathit{seed}}
\newcommand{\NE}{\mathit{NE}}
\newcommand{\G}{\mathcal{G}}
\newcommand{\PS}{\mathit{PS}}

\subsection{Computational NE Definition}

Since we consider computationally-bounded players, we take a player's
strategy in $G^{\infty}$ to be  a (possibly
probabilistic) Turing machine (TM), which outputs at each round an action to
be played, based on its internal memory and the history of play so far.
(The TMs considered in BC+ did not have internal memory.)
We consider only TMs that at round $t$ use polynomial in $nt$ many steps
to compute the next action, 
where $n$ is the maximum number of actions a player has in $G$.  
Thus, $n$ is a measure of the size of $G$.%
\footnote{When we talk about polynomial-time algorithms, we mean
polynomial in $n$.  We could use other measures of the size of $G$, such
as the total number of actions.  Since all reasonable choices of size
are polynomially related, the choice does not affect our results.}
Denote by $M_i$ the TM used by player $i$, and let $\vec{M}=(M_1,\ldots ,M_c)$.

Note that a profile $\vec{M}$ induces a distribution $\rho_{\vec{M}}$
on infinite histories of play.
Let $\rho_{\vec{M}}^t$ denote the induced distribution on  
$H^t$, the set of histories of length $t$.
(If $t=0$, we take $H^0$ to consist of 
the unique history of length 0, namely $\langle \, \rangle$.)
Player $i$'s utility if $\vec{M}$ is played, denoted $p_i(\vec{M})$, is
defined as follows:
\begin{equation*}
p_i(\vec{M})=\delta\sum_{t=0}^\infty (1-\delta)^t
\sum_{h \in H^{t}, \vec{a} \in A} \rho_{\vec{M}}^{t+1}(h \cdot \vec{a})
[u_i(\vec{a})].  
\end{equation*}
Thus, the discount factor is $1-\delta$.
Note that the initial $\delta$ is a normalization factor.  It guarantees
that if  $u_i(\vec{a}) \in [b_1,b_2]$ for all joint actions $\vec{a}$ in
$G$, then $i$'s utility is in $[b_1,b_2]$, no matter which TM profile
$\vec{M}$ is played.

We are now ready to define the notion of equilibrium we
use. Intuitively, as we model players as polynomial-time TMs, we
consider a profile of TMs an equilibrium in a game if there is no player
and no other polynomial-time TM 
that gives that player a higher expected payoff (or up to an $\epsilon$ for an $\epsilon$-NE). 

Since we consider (probabilistic) TMs that run in polynomial time in the
size of the game, we cannot consider a single game.  For
any fixed game, 
running in polynomial time in the size of the game
is meaningless.  Instead, we
need to consider a sequence of games. This leads to the following
definition. 

\begin{definition} 
An infinite sequence of strategy profiles
$\vec{M}^1,\vec{M}^{2},\ldots$, where $\vec{M}^k=(M^k_1,...,M^k_c)$ is
an $\epsilon$-NE of an infinite sequence of games $G_1^{\infty},G_{2}^{\infty},\ldots$
where the size of $G_k$ is $k$ if, for all players 
%lior1
%$i\in [c]$ and all infinite sequences of polynomial-time TMs
%$M^1,M^{2},\ldots$ (polynomial in $n$ and 
%$t$, as discussed above), 
%there exist $k_0$ such that for all $k\geq k_0$
%\begin{equation*}
%p^k_i(M^k,\vec{M}^k_{-i})\leq p^k_i(\vec{M}^k)+\epsilon(k).
%\end{equation*}
$i\in [c]$ and all non-uniform PPT adversaries 
$\bar{M}$ (polynomial in $k$ and 
$t$, as discussed above), 
there exist $k_0$ such that for all $k\geq k_0$
\begin{equation*}
p^k_i(\bar{M},\vec{M}^k_{-i})\leq p^k_i(\vec{M}^k)+\epsilon(k).
\end{equation*}
where $p^k_i$ is the payoff of player $i$ in game $G_k^{\infty}$.
\end{definition}

We note that the equilibrium definition we use considers only deviations
that can be implemented by non-uniform polynomial-time TMs. This is different from
both the usual definition of NE and from the definition used by BC+,
who allow arbitrary deviations. 
But this difference is exactly what allows us to use cryptographic techniques.
The need to define
polynomial-time deviation is the reason for considering sequences of
games instead of a single game. 
There are other reasonable ways of capturing
polynomial-time adversaries. As will be seen from our proof, our
approach is quite robust, so our results should hold for any reasonable
definition. 

\subsection{Computing an equilibrium}

In this section we describe the equilibrium strategy and show how to efficiently compute it. We first start with some definition and tool we need for our proof.

\subsubsection{Preliminaries}
\begin{definition}
Let $\G_{a,b,c,n}$ be the set of all games with $c$ players, at most $n$
actions per player, 
integral payoffs\footnote{Our result also hold for rational payoffs 
except then the size of the game needs to take into account the
bits needed to represent the payoffs}, 
maximum payoff $a$, and minimum payoff $b$. 
\end{definition}

Note that by our assumption that the minimax payoff is 0 for all
players, we can 
assume $a\ge0$, $b\le0$, and $a-b > 0$ (otherwise $a=b=0$, which makes
the game uninteresting).
We start by showing that, given a correlated strategy $\sigma$ in a game $G$,
players can get an average payoff that is arbitrarily close to their
payoff in $\sigma$ by playing a fixed sequence of action profiles repeatedly.
\begin{lemma}\label{sequence}
For all $a$, $b$, $c$,  all polynomials $q$, 
all $n$, all games 
$G\in \G_{a,b,c,n}$, and all correlated strategies $\sigma$ in $G$, 
if the expected payoff vector of playing $\sigma$ is $p$
then there
exists a 
sequence $\sq$ of length $w(n)$,
where $w(n) = ((a-b)q(n) +1)n^c$,
such that player
i's average payoff in $\sq$ is at least $p_i-1/q(n)$. 
\end{lemma}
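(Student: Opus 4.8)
The plan is to realize the correlated strategy $\sigma$ by a deterministic sequence that plays each action profile a number of times roughly proportional to its $\sigma$-probability, and then to bound the rounding error. Since $G$ has at most $n$ actions per player, there are at most $m := n^c$ action profiles $\vec{a}\in A$, and $\sigma$ assigns each a probability $\sigma(\vec{a})$. I would choose, for each $\vec{a}$, a nonnegative integer $k_{\vec{a}}$ recording how many times $\vec{a}$ appears in $\sq$, subject to $\sum_{\vec{a}} k_{\vec{a}} = w(n)$ (so that $\sq$ really has length $w(n)$) and $k_{\vec{a}}$ as close as possible to $w(n)\sigma(\vec{a})$. The sequence $\sq$ is then any ordering of $k_{\vec{a}}$ copies of each profile $\vec{a}$, and player $i$'s average payoff in $\sq$ is exactly $\sum_{\vec{a}} (k_{\vec{a}}/w(n))\, u_i(\vec{a})$.

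The key step is the rounding. First I would set $k_{\vec{a}} = \lfloor w(n)\sigma(\vec{a})\rfloor$; the shortfall $d = w(n) - \sum_{\vec{a}} \lfloor w(n)\sigma(\vec{a})\rfloor = \sum_{\vec{a}} \{w(n)\sigma(\vec{a})\}$ is an integer (because $\sum_{\vec{a}} w(n)\sigma(\vec{a}) = w(n)$ is an integer) lying in $\{0,1,\dots,m-1\}$. I would then add $1$ to exactly $d$ of the $k_{\vec{a}}$'s, say those with the largest fractional parts. This produces nonnegative integers summing to $w(n)$ with $|k_{\vec{a}} - w(n)\sigma(\vec{a})| \le 1$ for every $\vec{a}$, equivalently $|k_{\vec{a}}/w(n) - \sigma(\vec{a})| \le 1/w(n)$.

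It then remains to bound the error. Writing $e_{\vec{a}} = k_{\vec{a}}/w(n) - \sigma(\vec{a})$, player $i$'s average payoff minus $p_i$ equals $\sum_{\vec{a}} e_{\vec{a}}\, u_i(\vec{a})$. Using $|e_{\vec{a}}| \le 1/w(n)$, the fact that there are at most $n^c$ profiles, and the bound $|u_i(\vec{a})| \le a-b$ (which holds since $b \le u_i(\vec{a}) \le a$ with $a\ge 0 \ge b$), this would give
\begin{equation*}
\Big|\sum_{\vec{a}} e_{\vec{a}}\, u_i(\vec{a})\Big| \le (a-b)\,\frac{n^c}{w(n)} = \frac{a-b}{(a-b)q(n)+1} \le \frac{1}{q(n)},
\end{equation*}
by the choice $w(n) = ((a-b)q(n)+1)n^c$. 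Hence player $i$'s average payoff is at least $p_i - 1/q(n)$, as required.

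The only real subtlety is the rounding step: it must produce integers that sum to \emph{exactly} $w(n)$ while keeping each close to its target. This is a standard apportionment argument, but it is where the exact-sum constraint forces a per-coordinate error bound of $\le 1$ rather than $\le 1/2$. Everything else is a direct error estimate, and the specific form of $w(n)$ is engineered precisely so that the accumulated error $(a-b)n^c/w(n)$ collapses to at most $1/q(n)$.
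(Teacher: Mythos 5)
Your proof is correct and takes essentially the same approach as the paper: realize $\sigma$ by a deterministic sequence playing each profile $\vec{a}$ roughly $w(n)\sigma(\vec{a})$ times and bound the rounding error, with the form of $w(n)$ chosen so that the accumulated error is at most $1/q(n)$. The only difference is in the rounding: the paper takes plain floors, so its sequence has length between $w(n)-n^c$ and $w(n)$ and its error estimate needs a case split on the sign of $p_i$, whereas your exact apportionment to total $w(n)$ yields a uniform per-profile error of $1/w(n)$, a one-line bound, and a sequence whose length actually matches the $w(n)$ claimed in the statement.
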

\begin{proof}
Given $\sigma$, we create $\sq$ the obvious way: by playing each action
in proportion to the probability $\sigma(\vec{a})$.  
More precisely, let $r =a-b$, and define
\mbox{$w(n)=(r q(n)+1) n^c$}, as in the statement of the lemma.  We create a
sequence $\sq$ by 
playing each action profile $\vec{a}$ $\lfloor w(n)\sigma(\vec{a})
\rfloor$ times, in some fixed order. Notice that the length of this
sequence is between $w(n) - n^c$ and $w(n)$. The average payoff player
$i$ gets  
in $\sq$ is
\begin{align*}
v_i'&=\frac{1}{\sum_{\vec{a}\in A}\lfloor w(n)\sigma(\vec{a}) \rfloor}\sum_{\vec{a}\in A}\lfloor w(n)\sigma(\vec{a}) \rfloor u_i(\vec{a})\\
& \geq \frac{1}{\sum_{\vec{a}\in A}\lfloor w(n)\sigma(\vec{a}) \rfloor}
\left(\sum_{\vec{a}\in A, u_i(\vec{a})\geq 0} (w(n)\sigma(\vec{a})-1)
u_i(\vec{a})+\sum_{\vec{a}\in A, u_i(\vec{a})<0} w(n)\sigma(\vec{a})
u_i(\vec{a})\right) \\ 
& = \frac{w(n)\sum_{\vec{a}\in A}\sigma(\vec{a})
u_i(\vec{a})}{\sum_{\vec{a}\in A} \lfloor w(n)\sigma(\vec{a})\rfloor}
-\frac{\sum_{\vec{a}\in A, u_i(\vec{a})\geq
0}u_i(\vec{a})}{\sum_{\vec{a}\in A}\lfloor w(n)\sigma(\vec{a}) \rfloor}
\geq  \frac{w(n)p_i}{\sum_{\vec{a}\in A}\lfloor w(n)\sigma(\vec{a})
\rfloor} -\frac{a n^c}{ w(n)-n^c}.
\end{align*}
If $p_i< 0$,
\begin{align*}
v_i'&\geq \frac{w(n)p_i}{\sum_{\vec{a}\in A}\lfloor w(n)\sigma(\vec{a}) \rfloor} -\frac{a n^c}{ w(n)-n^c}  \geq \frac{w(n)p_i-an^c}{w(n)-n^c}\\
&= \frac{(rq(n)+1)n^c p_i-a n^c}{(rq(n)+1)n^c-n^c}=  \frac{rq(n)n^c
p_i-(a-p_i)n^c}{rq(n)n^c}\geq p_i-\frac{1}{q(n)}. 
\end{align*}
If $p_i\geq 0$,
\begin{align*}
v_i'&\geq \frac{w(n)p_i}{\sum_{\vec{a}\in A}\lfloor w(n)\sigma(\vec{a}) \rfloor} -\frac{a n^c}{ w(n)-n^c}  \geq p_i-\frac{an^c}{w(n)-n^c}\\
&=p_i -\frac{a n^c}{(rq(n)+1)n^c-n^c}=p_i-  \frac{an^c}{rq(n)n^c}\geq p_i-\frac{1}{q(n)} .
\end{align*}
\end{proof}

\begin{lemma}\label{derandomize}
For all $a$, $b$, $c$, all polynomials $q$  and $w$, 
all $G\in \G_{a,b,c,n}$, and all sequences $\sq$ of length $w(n)$, 
if the average payoff vector of playing $\sq$ is $p$,
 then   
for all $\delta \le 1/f(n)$, 
where $f(n) = (a-b) w(n) q(n)$,
 if $sq$ is
played infinitely often, player i's payoff in $G^{\infty}(\delta)$ 
is at least
$p_i-1/q(n)$. 
\end{lemma}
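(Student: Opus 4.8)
Since $\sq$ is played deterministically and periodically with period $w(n)$, the play it induces is a single deterministic history, so player $i$'s payoff in $G^{\infty}(\delta)$ is simply $P_i = \delta \sum_{t=0}^\infty (1-\delta)^t u_i(\vec{a}^t)$, where $\vec{a}^t$ denotes the $(t \bmod w(n))$-th action profile of $\sq$. I will in fact establish the two-sided bound $|P_i - p_i| \le 1/q(n)$, which is stronger than what is claimed. The first step is to collapse the infinite sum to a single period: grouping the terms into consecutive blocks of length $w(n)$ and summing the resulting geometric series in the block index gives
\[
P_i = \frac{\delta}{1-(1-\delta)^{w(n)}} \sum_{j=0}^{w(n)-1} (1-\delta)^j u_i(\vec{a}^j).
\]

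Next I write $p_i$ itself as a (trivially) discounted sum, $p_i = \delta\sum_{t=0}^\infty (1-\delta)^t p_i$, and set $d_j = u_i(\vec{a}^j) - p_i$ for $0 \le j < w(n)$. Because $p_i$ is exactly the average payoff of $\sq$, the deviations over one period sum to zero: $\sum_{j=0}^{w(n)-1} d_j = 0$. Subtracting the two block expressions then yields
\[
P_i - p_i = \frac{\delta}{1-(1-\delta)^{w(n)}} \sum_{j=0}^{w(n)-1} (1-\delta)^j d_j.
\]

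The crucial observation---and the step I expect to carry the argument---is that since $\sum_j d_j = 0$, I may replace $(1-\delta)^j$ by $(1-\delta)^j - 1$ inside the sum without changing its value. This is precisely what produces the extra factor of $\delta$ the bound needs; a naive estimate of $\sum_j (1-\delta)^j d_j$ gives only a constant and is useless. Each factor $1-(1-\delta)^j$ is at most $j\delta \le w(n)\delta$ by Bernoulli's inequality, and $|d_j| \le a-b$, so the sum is bounded by $\tfrac{1}{2}\delta(a-b)w(n)^2$. For the prefactor, the hypothesis $\delta \le 1/f(n)$ forces $\delta w(n) \le 1/q(n) \le 1$ (using $a-b \ge 1$, since payoffs are integral with $a>b$, and $q(n)\ge 1$); hence $(1-\delta)^{w(n)} \le e^{-\delta w(n)}$ together with $1-e^{-x} \ge x/2$ on $[0,1]$ gives $1-(1-\delta)^{w(n)} \ge \delta w(n)/2$, so that $\delta/(1-(1-\delta)^{w(n)}) \le 2/w(n)$. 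Multiplying the two estimates gives $|P_i - p_i| \le \delta(a-b)w(n)$, and substituting $\delta \le 1/f(n) = 1/((a-b)w(n)q(n))$ yields $|P_i - p_i| \le 1/q(n)$, which in particular gives $P_i \ge p_i - 1/q(n)$ as required.
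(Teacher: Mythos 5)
Your proof is correct, and it takes a genuinely different route from the paper's. Both arguments begin the same way---reduce the infinite discounted sum to a single period and exploit the fact that the deviations $d_j = u_i(\vec{a}^j)-p_i$ sum to zero over one period---but the key estimate differs. The paper bounds the one-period discounted deviation sum by a worst-case rearrangement: it argues that the loss is maximized when all negative deviations are collected at the start of the period and all positive ones at the end, sums the resulting geometric series exactly, and ends up with a one-sided bound of the form $\delta P_{neg} \ge -\delta(a-b)w(n)$. You instead use the zero-sum property to replace the weights $(1-\delta)^j$ by $(1-\delta)^j - 1$ and then apply Bernoulli's inequality, paying a factor $w(n)^2$ in the sum that is exactly cancelled by the $2/w(n)$ bound on the prefactor $\delta/(1-(1-\delta)^{w(n)})$. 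Your version is arguably cleaner (no case split on the sign of $p_i$, no rearrangement argument to justify) and yields the stronger two-sided bound $|P_i - p_i| \le \delta(a-b)w(n) \le 1/q(n)$; the paper's version avoids needing the lower bound $1-(1-\delta)^{w(n)} \ge \delta w(n)/2$ and hence the side condition $\delta w(n) \le 1$, though as you correctly note that condition does follow from $\delta \le 1/f(n)$ together with $a-b\ge 1$ and $q(n)\ge 1$.
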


\begin{proof}
Suppose that $\sq=(a_0,\ldots,a_{w(n)-1})$, and 
let $v_i$ be $i$'s payoff from $\sq^\infty$ in $G^\infty(\delta$).  Then
\begin{align*}
v_i&=\delta\sum_{t=0}^{\infty}(1-\delta)^{t w(n)}\sum_{k=0}^{w(n)-1}u(a_k)(1-\delta)^k \\
&=p_i+\delta\sum_{t=0}^{\infty}(1-\delta)^{t w(n)}\sum_{k=0}^{w(n)-1}(u(a_k)-p_i)(1-\delta)^k. 
\end{align*}
We want to bound the loss from the second part of the sum. Notice that
this is a discounted sum of a sequence whose average payoff is $0$. Call
this sequence $\sq'$.  Observe that, 
because of the discounting, in the worst case, $i$ gets all of his
negative payoff in the first round of $\sq'$ and all his positive
payoffs in the last round.  Thus, we can bound the discounted average
payoff by analyzing this case.  Let the sum of $i$'s negative payoffs in
$\sq'$ be $P_{neg}$, 
which means that the sum of $i$'s positive payoffs must be
$-P_{neg}$. Let $r=a-b$, let  
$v'_i = \min_{\vec{a} \in A} (u_i(\vec{a})-p_i)\geq -r $, and let  
$f(n)=r w(n) q(n)$, as in the statement of the lemma.  
So, if $\delta \le 1/f(n)$,
 player $i's$ average discounted payoff in the game is at least
\begin{align*}
v_i & \geq p_i+\delta\sum_{t=0}^\infty P_{neg}(1-\delta)^{w(n)t}+(-P_{neg})(1-\delta)^{w(n)(t+1)-1}\\
&=p_i+\delta( P_{neg}+(-P_{neg})(1-\delta)^{w(n)-1})\sum_{t=0}^\infty(1-\delta)^{w(n)t}\\
& =p_i+ \delta( P_{neg}+(-P_{neg})(1-\delta)^{w(n)-1})\frac{1}{1-(1-\delta)^{w(n)}} \\
& =
p_i+P_{neg}\delta\frac{1-(1-\delta)^{w(n)-1}}{(1-(1-\delta)^{w(n)})}\geq p_i+\delta P_{neg}\geq p_i+\frac{P_{neg}}{f(n)}\geq 
p_i+\frac{v_i'w(n)}{f(n)} = p_i-1/q(n). 
\end{align*}
\end{proof}

\iffalse The next lemma shows that after only polynomially many rounds 
a player's total payoff can be made smaller than any inverse polynomial.
\fi
The next lemma shows that, for every inverse polynomial, if we ``cut off''
the game after some 
appropriately large polynomial $p$ number of rounds (and compute the
discounted utility for the finitely repeated game considering only
$p(n)$ repetitions), each player's utility in the finitely repeated
game is negligibly close to his utility in the infinitely repeated
game---that is, the finitely repeated game is a ``good'' approximation
of the infinitely repeated game.

\iffalse
\begin{lemma}\label{negligible}
For all $a$, $b$, $c$, all polynomials $q$, all $n$, all games
$G\in \G_{a,b,c,n}$, 
and all $0<\delta<1$,
player $i$'s total payoff from round $\frac{\ln(aq(n))}{\delta}$ on in 
$G^{\infty}(\delta)$ is at most $1/q(n)$, and 
player $i$' total payoff from round $\frac{n}{\delta}$ on is at most $a/e^n$.
\end{lemma}
\fi
\begin{lemma}\label{negligible}
For all $a$, $b$, $c$, all polynomials $q$, all $n$, all games
$G\in \G_{a,b,c,n}$, all $0<\delta<1$, all strategy profiles
$\vec{M}$, and all players $i$, $i$'s expected utility $p_i[\vec{M}]$ in game 
$G^{\lceil n/\delta\rceil}(\delta)$ and $p_i[\vec{M}]$ in game
$G^{\infty}(\delta)$ differ by at most $a/e^{n}$.
\end{lemma}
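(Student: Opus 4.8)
The plan is to exploit the fact that the payoff in the finitely repeated game $G^{\lceil n/\delta\rceil}(\delta)$ is literally a truncation of the infinite series defining $p_i$ in $G^{\infty}(\delta)$. First I would write out both payoffs using the defining formula: setting $T=\lceil n/\delta\rceil$ and abbreviating the expected stage payoff at round $t$ as $\bar u_i^t:=\sum_{h\in H^t,\vec a\in A}\rho_{\vec M}^{t+1}(h\cdot\vec a)\,u_i(\vec a)$, the infinite-game payoff is $\delta\sum_{t=0}^{\infty}(1-\delta)^t\bar u_i^t$, while the finite-game payoff is the same expression truncated at $t=T-1$. I would note that the induced distribution $\rho_{\vec M}^{t+1}$ over the first $T$ rounds of play is identical in both games, since the profile $\vec M$ and the stage game are the same and only the discounted reweighting differs. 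Hence the difference between the two payoffs is exactly the tail $\delta\sum_{t=T}^{\infty}(1-\delta)^t\bar u_i^t$.

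Next I would bound this tail in absolute value. Since every stage payoff lies in $[b,a]$, each $\bar u_i^t$ is a convex combination of such values, so $|\bar u_i^t|\le a$ (recall $a\ge 0\ge b$; more generally one could use $\max(a,-b)$). Pulling this bound out of the sum gives $\bigl|\delta\sum_{t=T}^{\infty}(1-\delta)^t\bar u_i^t\bigr|\le a\,\delta\sum_{t=T}^{\infty}(1-\delta)^t$, and the geometric series evaluates cleanly as $\delta\sum_{t=T}^{\infty}(1-\delta)^t=\delta\,(1-\delta)^T\cdot\frac{1}{\delta}=(1-\delta)^T$.

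Finally I would convert $(1-\delta)^T$ into the stated exponential bound using the standard inequality $1-\delta\le e^{-\delta}$, so that $(1-\delta)^T\le e^{-\delta T}$. Because $T=\lceil n/\delta\rceil\ge n/\delta$, we have $\delta T\ge n$ and therefore $(1-\delta)^T\le e^{-n}$. Combining the two preceding estimates yields $|p_i[\vec M]^{G^{\infty}(\delta)}-p_i[\vec M]^{G^{T}(\delta)}|\le a/e^{n}$, as required.

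As for difficulty, this lemma is essentially a routine tail estimate and I do not expect a genuine obstacle. The only two points that require a little care are (i) confirming that the finite-game payoff really is the truncation of the infinite series under the given $\delta$-normalization, i.e., that the round-by-round induced distributions agree over the common horizon of length $T$, and (ii) handling the sign and magnitude of the stage payoffs correctly when both positive and negative utilities occur, which is exactly why the bound is phrased in terms of the extreme payoff magnitude.
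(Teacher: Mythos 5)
Your proposal is correct and follows essentially the same route as the paper's proof: both express the difference as the tail of the discounted series starting at round $\lceil n/\delta\rceil$, bound each expected stage payoff by $a$, evaluate the geometric series to get $(1-\delta)^{\lceil n/\delta\rceil}$, and conclude via $(1-\delta)^{1/\delta}\leq 1/e$. The only (cosmetic) difference is that you flag the sign issue for the absolute difference (suggesting $\max(a,-b)$), whereas the paper bounds only the one direction with $a$.
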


\begin{proof}
Let $p_i^t(\vec{M})$ denote player $i$'s expected utility if the players
are playing 
$\vec{M}$ and the game ends at round $t$. 
Recall that $(1-\delta)^{1/\delta} \le 1/e$.
$$\begin{array}{lll}
&p_i^{\infty}(\vec{M})-p_i^{\lceil n/\delta\rceil}(\vec{M})\\
=&\delta\sum_{t=0}^\infty
(1-\delta)^t 
\sum_{h \in H^{t}, \vec{a} \in A} \rho_{\vec{M}}^{t+1}(h \cdot \vec{a})
[u_i(\vec{a})]-
\delta\sum_{t=0}^{\lceil n/\delta\rceil} (1-\delta)^t
\sum_{h \in H^{t}, \vec{a} \in A} \rho_{\vec{M}}^{t+1}(h \cdot \vec{a})
[u_i(\vec{a})]\\
= &\delta\sum_{t=\lceil n/\delta\rceil+1}^{\infty} (1-\delta)^t
\sum_{h \in H^{t}, \vec{a} \in A} \rho_{\vec{M}}^{t+1}(h \cdot \vec{a})
[u_i(\vec{a})]\\
\leq& \delta\sum_{t=\lceil n/\delta\rceil}^{\infty} (1-\delta)^t a\\
= &\delta(1-\delta)^{\lceil n/\delta\rceil}\sum_{t=0}^{\infty}(1-\delta)^ta
=\delta(1-\delta)^{\lceil n/\delta\rceil}\frac{a}{\delta}\leq \frac{a}{e^n}.
\end{array}
$$ 
\end{proof}
\subsubsection{The $\epsilon$-NE strategy and the algorithm}\label{sec:computingNE}

Let \mbox{$A_i^0\subset A_i$} be a non-empty set and let \mbox{$A_i^1=A_i \setminus A_i^0$}.\footnote{We assume that each player has at least two actions in $G$. 
This assumption is without loss of generality---we can essentially
ignore players for whom it does not hold.}
A player can broadcast an $m$-bit string by using his actions for $m$ rounds, by treating actions from $A_i^0$ as 0 and actions from $A_i^1$ as 1.
Given a polynomial $\phi$ (with natural coefficients),
let $(\Gen,\Enc,\Dec)$ be a multi-message multi-key secure $\phi$-bit,
if the security parameter is $k$, the length of an encrypted message is
$z(k)$ for some polynomial $z$. 
Let $\sq=(s_1,s_2\ldots ,s_m)$ be a fixed sequence of action profiles.    
Fix a polynomial-time pseudorandom function ensemble $\{\PS_s: s \in
\{0,1\}^*\}$. 
For a game $G$ such that $|G|=n$, consider the   
strategy $\sigma^{\NE}$ for player $i$ in $G^{\infty}(\delta)$ that has the following three phases. Phase
1 explains what to do if no deviation occurs: play $\sq$.  Phase 2
gives the preliminaries of what to do if a deviation does occur:
roughly, compute a random seed that is shared with all the
non-deviating players.   Phase 3 explains how to use the random seed to
produce a correlated punishment strategy that punishes the deviating
player. Formally let $\vec{M}^{\sigma^{\NE}}$ be the TMs that implement the following strategy:

\begin{enumerate}
\item Play according to $\sq$ (with wraparound) as long as
all players played according to $\sq$ in the previous round.
\item After detecting a deviation by player $j \ne i$
in round $t_0$:%
\footnote{If more than one player deviates while playing $\sq$, the
players punish the one with the smaller index. The punished player
plays his best response to what the other players are doing in this phase.} 
	\begin{enumerate}
\item Generate a pair $(pk_i,sk_i)$ using
$\Gen(1^n)$.
Store $sk_i$ in memory and 
use the next $l(n)$ rounds to broadcast $pk_i$,
as discussed above.
\item If $i = j+1$ (with wraparound), player $i$ does the following: 
\begin{itemize}
\item $i$ records $pk_{j'}$ for all players $j' \notin \{i,j\}$;
\item $i$ generates a random $n$-bit seed $\seed$;
\item for each player $j' \notin \{i,j\}$, $i$ computes
$m=Enc_{pk_{j'}}(\seed)$, 
and uses the  next \mbox{$(c-2)z(n)$} rounds to communicate these strings
to the players other than $i$ and $j$ (in some predefined order). 
\end{itemize}
\item If $i \ne j+1$, player $i$ does the following:
\begin{itemize}
\item $i$ records the actions played by $j+1$ at time slots designated
for $i$ to retrieve $Enc_{Pk_{i}}(\seed)$;
\item $i$ decrypts to obtain $\seed$, using $Dec$ and $sk_i$.
\end{itemize}
\end{enumerate}
\item Phase 2 ends after $ \phi(n)+(c-2)z(n)$ rounds.  The players 
other than $j$ then
compute $\PS_{\seed}(t)$ and use it to determine which action profile to
play according to the distribution defined by   
a fixed (correlated) punishment strategy against $j$. 
\end{enumerate}

Note that if the players other than $j$ had played a punishment strategy
against $j$, then $j$ would get his minimax payoff of 0.  What the
players other than $j$ are
actually doing is 
playing an approximation to a punishment strategy in two senses: first
they are using a psuedorandom function to generate the randomness, which
means that they are not quite playing according to the actual punishment
strategy.  Also, $j$ might be able to guess which pure strategy profile they
are actually playing at each round, and so do better than his minimax
value.  As we now show, $j$'s expected gain during the punishment phase 
is negligible. 

\begin{lemma}\label{psfMinmax}
For all $a$, $b$, $c$, all polynomials $t$ and $f$, all $n$, and all
games $G\in \G_{a,b,c,n}$, in $G^{\infty}(1/f(n))$, if the players
other than $j$ play 
$\vec{M}^{\sigma^{\NE}}_{-j}$, then if $j$ deviates at round $t(n)$, $j$'s 
expected payoff during the punishment phase is negligible.
\end{lemma}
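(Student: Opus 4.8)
The plan is to compare the real punishment phase with an idealized version in which the correlating randomness is truly random and the seed is hidden from $j$. In the idealized version $j$'s expected per-round payoff never exceeds the minimax value $\mm_j(G)=0$, so the discounted contribution of the whole (infinite) punishment phase is at most $0$; a hybrid argument then transfers this bound, up to a negligible loss, to the real game. There are two cryptographic objects to neutralize: the ciphertexts $\Enc_{pk_{j'}}(\seed)$ that $j$ observes, and the pseudorandom function $\PS_{\seed}$ that drives the players' correlated punishment.

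I would introduce three experiments, each describing the interaction of $j$'s (non-uniform PPT) machine with the punishing players $\vec{M}^{\sigma^{\NE}}_{-j}$ from the deviation round $t(n)$ onward (covering both the key/ciphertext exchange of Phase~2 and the punishment of Phase~3). Hybrid $0$ is the real game. In Hybrid $1$, player $j+1$ still draws a random seed $\seed$ and the punishment actions are still generated from $\PS_{\seed}$, but the strings broadcast in Phase~2(b) are replaced by encryptions of a fixed dummy message $0^n$ rather than of $\seed$. Because $j$ never holds any of the secret keys $sk_{j'}$, the multi-message multi-key security of $(\Gen,\Enc,\Dec)$ guarantees that $j$'s behaviour---and hence its expected punishment-phase payoff---changes only negligibly between Hybrid $0$ and Hybrid $1$: a distinguisher picks $m_0=\seed$, $m_1=0^n$, receives the challenge ciphertexts and embeds them as the Phase~2(b) broadcasts, simulates the whole play (it knows $\seed$ as the side information $\tau$, so it can itself evaluate $\PS_{\seed}$ in Phase~3), and outputs a bit obtained by normalizing $j$'s realized discounted payoff into $[0,1]$; a non-negligible payoff gap would yield a non-negligible distinguishing advantage. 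In Hybrid $2$ we further replace $\PS_{\seed}$ by a truly random function $H$. Now the dummy ciphertexts no longer depend on $\seed$, so a PRF distinguisher can simulate the entire interaction using only oracle access to its function (querying it at each round $t$ to sample the punishment profile) and again convert $j$'s payoff into a distinguishing bit; PRF security bounds the gap between Hybrid $1$ and Hybrid $2$ by a negligible function.

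It remains to bound $j$'s payoff in Hybrid $2$. Since $H$ is truly random and the punishing players query it at the distinct round indices $t$, the joint profiles $\vec{a}_{-j}$ they play are i.i.d. draws from the fixed correlated punishment strategy $\vec{\sigma}_{-j}$ against $j$, each draw being independent of the history $j$ has observed. Hence, whatever action $a_j$ the machine of $j$ selects at round $t$ as a function of that history $h^{<t}$, we have $E[u_j(a_j,\vec{a}_{-j})\mid h^{<t}]\le \max_{a\in A_j}E_{\vec{\sigma}_{-j}}[u_j(a,\vec{a}_{-j})]=\mm_j(G)=0$. Summing these per-round bounds against the nonnegative discount weights shows that $j$'s expected discounted payoff over the entire punishment phase in Hybrid $2$ is at most $0$. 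Combining the three steps, $j$'s expected punishment-phase payoff in the real game is at most $0+\mathrm{negl}(n)$, i.e.\ negligible, as claimed.

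The main obstacle is the dual role played by $\seed$: it is simultaneously encrypted (and thus part of $j$'s view) and used as the key of the punishing PRF. A direct appeal to PRF security fails, since the natural PRF reduction would have to produce $j$'s view, which contains $\Enc_{pk_{j'}}(\seed)$, without knowing $\seed$. Decoupling the two uses is precisely why the encryption step must come first: only after the ciphertexts have been made independent of $\seed$ (Hybrid $1$) can the seed be accessed solely through an oracle in the PRF step (Hybrid $2$). The remaining points are routine but should be checked: the simulators are PPT because the punishing players run in polynomial time and the correlated punishment strategy is computable by linear programming, and the payoff-to-bit conversion costs only a constant factor because $a$ and $b$ are constants, so negligible advantages remain negligible.
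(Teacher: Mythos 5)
Your proof follows essentially the same route as the paper's: the paper's hybrids H3, H2, H1 are your Hybrids 0, 1, 2 (the paper likewise decouples the encryption from the PRF for exactly the reason you identify), and the final bound in the ideal hybrid is the same minimax argument. Two points of difference are worth noting. First, where you convert the payoff gap into a distinguishing bit by outputting $1$ with probability given by the normalized realized payoff of a single run, the paper instead runs $h(n)=n(a-b)^2(1/\mu(n))^2$ independent simulations, thresholds the average payoff at $\mu(n)/2$, and applies Hoeffding's inequality; this is why the paper needs the multi-instance PRF notion and the multi-message multi-key encryption notion, whereas your single-run reduction would get away with one PRF instance and a single message encrypted under $c-2$ keys. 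Both work. Second, there is a small gap in your reduction as written: the simulated interaction runs ``from the deviation round onward,'' i.e., over the infinite Phase 3, so the simulator is not PPT and cannot compute the realized discounted payoff exactly. The paper's first step is to truncate to $nf(n)$ rounds, using Lemma~\ref{negligible} to argue that the tail contributes only a negligible amount; you need the same truncation for your distinguishers to be legitimate. Relatedly, your claim that the ideal hybrid yields payoff ``at most $0$'' should be ``at most negligible'': the punishing players sample from the correlated punishment distribution using finitely many random bits, so they can only implement a discretized approximation of it (the paper notes this and discretizes the distribution to multiples of $2^{-n}$).
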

\begin{proof}
Since we want to show $j$'s expected payoff during the punishment phase (phase (3) only) is negligible, it suffices 
to consider only polynomially many rounds of playing phase (3) (more
precisely, at most $nf(n)$ rounds); 
by Lemma \ref{negligible}, any payoff beyond then is guaranteed to be
negligible 
due to the discounting. 

We construct three variants of the strategy $\vec{M}^{\sigma^{\NE}}_{-j}$, that
vary in phases (2) and (3).  We can think of these variants as
interpolating between the strategy above and the use of true randomness.
(These variants assume an oracle that provides appropriate information;
these variants are used only to make the claims precise.)
\begin{description}
\item[H1]
In phase (2), the punishing players send their public keys to 
$j+1$.  For each player $j'$ not being punished, player $j+1$ then
encrypts the seed $0$ using $(j')$'s public key, and then sends the
encrypted key to $j'$.
In phase (3), the punishing players get the output of a truly random
function (from an oracle), and use it to play the true punishment
strategy.  (In this case, phase (2) can be eliminated.)
\item[H2]
In phase (2), the punishing players send their public keys to
$j+1$.  For each player $j'$ not being punished, player $j+1$ 
encrypts the seed $0$ using $(j')$'s public key, and then sends the
encrypted key to $j'$.
In phase (3), the punishing players get a joint random seed $seed$ (from
an oracle) and use the outputs of 
$\PS_{\seed}$ to decide which strategy profile to play in each round. 
(Again, in this case, phase (2) can be eliminated.)
\item[H3]
In phase (2), the punishing players send their public keys to
$j+1$.  Player $j+1$ chooses a random seed $\seed$ and, for each player
$j'$ not being punished, $j+1$  
encrypts $\seed$ using $(j')$'s public key, and then sends the
encrypted key to $j'$.
In phase (3), the punishing players use the outputs of $\PS_{\seed}$ to
decide which strategy profile to play in each round.   
\end{description}

It is obvious that in $H1$, $j$'s expected payoff is
negligible. (Actually, there is a slight subtlety here.  As we observed
above, using linear programming, we 
can compute a strategy that gives the correlated minimax, which gives
$j$ an expected payoff of 0.  To actually implement this correlated
minimax, the players need to sample according to the minimax distribution.
They cannot necessarily do this exactly
(for example, 1/3 can't be computed exactly using random bits). 
 However, given $n$, the 
distribution can be discretized to the closest
rational number of the form $m/2^n$ using at most $n$ random bits.
Using such a discretized distribution, the players other than $j$ can
ensure that $j$ gets only a negligible payoff.)

We now claim that in $H2$, $j$'s expected payoff during the punishment
phase is negligible. Assume for contradiction that a player playing
$H2$ has a non-negligible payoff 
$\mu(n)$ for all $n$ (i.e., there exists some polynomial $g(\cdot)$
such that $\mu(n) \geq 1/g(n)$ for infinitely many $n$.).
Let \mbox{$h(n)=n (a-b)^2 (1/\mu(n))^2$}.
We claim that if $j$'s expected payoff is non-negligible,
then we can distinguish 
$h(n)$ instances of the PRF $\{PS_s: s \in \{0,1\}^n\}$ with
independently generated random 
seeds, from $h(n)$ independent
truly random functions, contradicting the multi-instance security of
the PRF $PS$.

More precisely, we construct a distinguisher $D$ that, given $1^n$ and
oracle access to a set of 
functions $f^1, f^2, \ldots, f^{h(n)}$, proceeds as follows. It
simulates $H_2$ 
(it gets the description of the machines to play as its non-uniform advice)
$h(n)$ times where in iteration $i'$, it uses the
function $f^{i'}$ as the randomization source of the correlated punishment strategy.
$D$ then computes the average payoff of player $j$ in the $h(n)$
runs, and outputs $1$ if this average exceeds
$\mu(n)/2$. 
Note that if the functions $f^1, f^2, \ldots, f^{h(n)}$ are 
truly independent random functions, then $D$ perfectly simulates
$H_1$ and thus, in each iteration $i'$, 
the expected payoff of player $j$ (during the punishment phase) is negligible.
On the other hand, if the functions $f^1, f^2, \ldots, f^{h(n)}$ are 
$h(n)$ independent randomly chosen instances of the PRF $\{PS_s: s \in
\{0,1\}^n\}$, 
then $D$ perfectly simulates $H_2$, and thus,
in each iteration $i'$,
the expected payoff of player $j$ (during the punishment phase) is at
least $\mu(n)$. 
\iffalse
distinguisher $D$ that, when given oracle access to the functions in the
set, uses them to simulate the punishment
strategy played at stage (3) and $j$'s strategy, in a sense to be made
precise below.
It can also simulate phase (1) by following the strategy for $t(n)$ rounds,
and phase (2) by simply sampling $c$ random keys and encrypting $0$. 
If the set is $H^h$, then the function used in each such  simulation is a truly random function, so the
players are actually playing $H1$, and thus $j$'s expected payoff is $0$. On
the other 
hand, if the set is $PS^h$ then the function is a PRF from the family $PS$, so player $j$ is playing $H2$, and has
expected payoff $1/g(n)$. The distinguisher repeats this experiment for all
the functions in the set, and 
outputs $1$ if
$j$'s average payoff over all experiments is larger then 
$1/2g(n)$. 
\fi

By Hoeffding's inequality \cite{hoeffding1963probability}, given $m$
random variables $X_1,\ldots,X_m$ all of which take on values in an
interval of size $c'$,
$p(|\overline{X}-E(\overline{X})|\geq r)\leq 2\exp(-\frac{2mr^2}{c'^2})$ .  
Since, in this setting, the range of the random variables is 
an interval
of size $a-b$,
the probability that $D$ outputs $1$ when the function are truly independent
is at most $2/e^n$, while the probability that $D$ outputs $1$ when the
functions are independent randomly chosen instances of the PRF $\{PS_s:
s \in \{0,1\}^n\}$  is at least $1-2/e^n$.  This, in turn, means that
the 
difference 
between them is not negligible, which 
is a contradiction.
Thus, $j$'s expected
payoff in $H2$ must be negligible.  
 
We now claim that in $H3$, player $j$'s expected payoff during the
punishment phase is also negligible. Indeed, 
if $j$ can get a non-negligible payoff, then we can
break the multi-message multi-key secure 
encryption scheme. 

Again, assume for contradiction that the punished player's expected
payoff in the punishment phase is 
a non-negligible function $\mu(n)$ for all $n$.
We can build a distinguisher $A=(A_1,A_2)$
(which also gets the description of the machines to play as its
non-uniform advice) 
to distinguish
$\{\text{IND-MULT}_0^{\Pi}(A, n,h,c)\}_n$ and $\{\text{IND-MULT}_1^{\Pi}(A,
n,h,c)\}_n$
(where we abuse notation and identify $c$ with 
the
constant polynomial that
always returns $c$).
Given $n$, 
$A_1$ randomly selects $h(n)$ messages $r_1,\ldots,r_{h(n)}$ and outputs
$(0,\ldots,0,r_1,\ldots,r_{h(n)},(pk_1,\ldots,pk_c))$.  
$A_2$ splits its input into pieces. The first piece contains the first
$c$ encryptions in $\mathcal{C}$
(i.e., the $c$ encryptions of the first message chosen, according to the
$c$ different encryption functions),
the second the next $c$ encryptions and so
on. Notice that each piece consists of $c$ different encryptions of the same
message in both cases.   
It can also simulate phase (1) by following the strategy for $t$
rounds. It then uses each piece, along with the public
keys, to simulate the communication in phase (2).  For piece $j$ it uses
$r_j$ as the seed of the PRF in phase (3).  
It repeats this experiment for all the different pieces of the input,
for a total of $h(n)$ times, and outputs $1$ if the punished player's
average payoff over all experiments using its strategy is more than $\mu(n)/2$.  

Note that if $b=1$, player $j$ faces $H3$ 
(i.e., the distributions over runs when $b=1$ is identical to the
distribution over runs with H3, since in both cases the seed is chosen
at random and the corresponding messages are selected the same way),
so player j's expected payoff in the
punishment phase is 
$\mu(n)$.
Thus, by Hoeffding's inequality
the probability that player j's average payoff in the punishment phase
is more then 
$\mu(n)/2$ is 
$1-2/e^n$, so $A_2$ outputs $1$ with that probability in the case $b=1$. 
On the other hand, if $b=0$, then this is just $H_2$. We know player $j$'s
expected payoff in the punishment phase in each experiment is no more than negligible in $H_2$, so the probability that the
average payoff  is more than 
$\mu(n)/2$
after $h(n)$ rounds, is negligible. 
This means that there is a non-negligible difference between the
probability $A$ outputs $1$ when $b=1$ and when $b=0$,  
which contradicts the assumption that the encryption scheme is  
multi-message multi-key secure public key
secure.  
Thus, the gain in $H3$ must be negligible. 

$H3$ is exactly the game that the punished player faces; thus, this
shows he can't hope to gain more than a negligible payoff in
expectation. 
\end{proof}

We can now state and prove our main theorem, which says that
$\sigma^{\NE}$ is an $\epsilon$-NE for all inverse polynomials $\epsilon$ and the can be computed in polynomial time. 

\begin{theorem}\label{eqExist}
For all $a$, $b$, $c$, and all polynomials $q$, there is a polynomial $f$
and a polynomial-time algorithm $F$
such that, 
for all sequences $G_1, G_2, \ldots$ of games with $G^j\in G_{a,b,c,j}$
and for all inverse polynomials $\delta\leq1/f$,
the sequence of outputs of $F$ given the sequence $G_1, G_2,
\ldots$ of inputs is a $\frac{1}{q}$-NE
%lior1: \delta is a polynomial not a sequence
%for $G_1^{\infty}(\delta_1), G_2^{\infty}(\delta_2), \ldots$.
for $G_1^{\infty}(\delta(1)), G_2^{\infty}(\delta(2)), \ldots$.
\end{theorem}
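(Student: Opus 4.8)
The plan is to show that the profile $\vec{M}^{\sigma^{\NE}}$ constructed above, instantiated with a suitable target, is a $\frac1q$-NE, and that it is produced by a polynomial-time $F$. For the target I would take $\sigma$ to be a correlated equilibrium of the stage game $G_k$: it gives every player at least its normalized minimax value $0$ (so it is individually rational), and it can be found by linear programming in time polynomial in $k$. The algorithm $F$ then computes $\sigma$, builds the broadcast sequence $\sq$ from $\sigma$ as in the proof of Lemma~\ref{sequence} (whose length $w(k)$ is polynomial), computes a correlated punishment strategy against each player by linear programming, and outputs the TMs implementing the three-phase strategy $\sigma^{\NE}$ using these ingredients together with the fixed PRF $\PS$ and encryption scheme. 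None of these depend on $\delta$, so a single $F$ works for every $\delta\le 1/f$, and each $M_i$ runs in polynomial time per round (a lookup into $\sq$ in phase~1, a bounded number of key/encryption operations in phase~2, one $\PS$ evaluation plus a sampling step in phase~3) using polynomial memory.

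Next I would bound, for an arbitrary player $i$ and an arbitrary non-uniform PPT deviation $\bar M$, the quantity $p_i^k(\bar M,\vec M^k_{-i})-p_i^k(\vec M^k)$. Let $t_0$ be the first round at which $\bar M$ departs from $\sq$ (if there is none the payoffs coincide). Since play through round $t_0-1$ is identical, the difference equals $(1-\delta)^{t_0}(C^{\mathrm{dev}}-C^{\sq})$, where $C^{\mathrm{dev}}$ and $C^{\sq}$ are the normalized continuation values from round $t_0$ under the deviation and under honest play. For the lower bound on $C^{\sq}$, continuing $\sq$ cyclically from $t_0$ is just playing a cyclic shift of $\sq$ forever, which has the same average as $\sq$; Lemmas~\ref{sequence} and~\ref{derandomize} then give $C^{\sq}\ge p_i-O(1/q')\ge -O(1/q')$, using that the target is individually rational ($p_i\ge 0$). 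For the upper bound on $C^{\mathrm{dev}}$ I would split the continuation into phase~2, of polynomial length $T=\phi(k)+(c-2)z(k)$, and phase~3: the phase-2 part is at most $(1-(1-\delta)^T)a\le a\delta T$ since each stage payoff is at most $a$, and the phase-3 part is bounded by $i$'s continuation value while being punished, which is negligible by Lemma~\ref{psfMinmax} applied with $j=i$. Hence the gain is at most $a\delta T+\mathrm{negl}(k)+O(1/q')$.

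It remains to pin down $f$. Taking the internal approximation polynomial $q'$ to be a large enough multiple of $q$ makes the $O(1/q')$ term at most $\frac1{3q(k)}$; choosing $f(k)\ge 3a\,q(k)T(k)$ (a polynomial, as $T$ and $q$ are) forces $a\delta T\le aT/f\le\frac1{3q(k)}$ whenever $\delta\le 1/f$; and for all large $k$ the negligible term is below $\frac1{3q(k)}$. I would enlarge $f$ further, if needed, so that $\delta\le 1/f$ also meets the hypotheses of Lemmas~\ref{derandomize} and~\ref{psfMinmax}. Summing the three bounds gives a gain of at most $1/q(k)$ for all $k$ beyond some $k_0$, which is exactly the $\frac1q$-NE condition. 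One minor point to dispatch: when $t_0\ge\lceil k/\delta\rceil$ the factor $(1-\delta)^{t_0}$ is itself negligible (as in Lemma~\ref{negligible}), so the gain is negligible regardless of $\bar M$, and Lemma~\ref{psfMinmax} need only be invoked for the polynomially many smaller values of $t_0$.

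I expect the main obstacle to be the continuation-value bookkeeping rather than any new cryptography, since the cryptographic content is already isolated in Lemma~\ref{psfMinmax}. The delicate part is organizing the discounted payoff so that the deviator's unavoidable short-term gain during the ``distracted'' key-exchange phase~2 is traded off against both the discounting and the guaranteed forgone continuation value $p_i$, and then exhibiting a \emph{single} polynomial $f$ that simultaneously satisfies the requirements of Lemmas~\ref{sequence}, \ref{derandomize}, and~\ref{psfMinmax} while keeping the total gain below $1/q$.
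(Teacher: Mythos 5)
Your proposal is correct and follows essentially the same route as the paper's proof: compute a correlated equilibrium and punishment strategies by linear programming, build $\sq$ via Lemma~\ref{sequence}, control the discounting loss via Lemma~\ref{derandomize}, bound a deviator's gain by (at most $a$ per round of the polynomial-length phase~2) plus (negligible in phase~3 by Lemma~\ref{psfMinmax}) against the forgone continuation value of at least $-O(1/q)$, and choose $f$ as the maximum of the polynomials these lemmas require. The only cosmetic difference is that you factor the gain explicitly as $(1-\delta)^{t_0}(C^{\mathrm{dev}}-C^{\sq})$, which is exactly the division by $(1-\delta(n))^t$ that the paper performs in its displayed inequalities.
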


\begin{proof}
Given a game $G^n \in \G(a,b,c,n)$,
the first step of the algorithm is to find a correlated equilibrium
$\sigma$ of
$G^n$. This can be done in polynomial time using linear programming. 
Since the minimax value of the game is 0 for all players, all players
have an expected utility of at least 0 using $\sigma$.
Let $r=a-b$. By Lemma \ref{sequence}, we can construct a sequence $\sq$ of
length $w(n)=(3r n q(n)+1)n^c$ that has an 
average payoff for each player that is at most $1/3q(n)$ less than his
payoff using $\sigma$.
By Lemma \ref{derandomize}, it follows that by setting the discount factor $
\delta<1/f'(n)$, where $f'(n)=3r w(n)q(n)$, the loss 
due to discounting is 
also at most $1/3q(n)$. 
We can also find a punishment strategy against each player in polynomial time,
using linear programming. 

We can now compute the strategy $\vec{M}^{\sigma^{\NE}}$
described earlier that uses the sequence $\sq$ and the
punishment strategies.  Let $\vec{\sigma}^*_n$ be this strategy when given $g_n$ as input.
Let $m(n)$ be the
length of phase (2).   (Note that $m(n)$ is a polynomial that depends
only on the choice of encryption scheme---that is, it depends on $l$,
where an $l$-bit public-key encryption scheme is used, and on $z$, where
$z(k)$ is the length of encrypted messages.)  Let
$$f(n)=\max(3q(n)(m(n)a+1),f'(n)).$$ 
Notice that $f$ is independent of the actual game as required.

We now show that $\vec{\sigma}^{*}_{1},\ldots$ as defined above is a $(1/q)$-NE.   
If in game $G^n$ a player follows $\sigma^{*}_n$, he gets at least $-2/3q(n)$. 
Suppose that player $j$ defects at round $t$; that is, that he plays
according to $\sigma^{*}_n$ until round $t$, and then defects. 
By Lemma \ref{negligible} if $t>\frac{n}{\delta(n)}$,
then any gain from defection is negligible, so
there exists some $n_1$ such that, for all $n>n_1$,   
a defection in round $t$ cannot result in the player gaining more than $\frac{1}{q(n)}$.
If player $j$ defects at round $t\leq \frac{n}{\delta(n)}$, he
gets at most $a$ for the duration of phase (2), which is at most $m(n)$
rounds, 
and then, by Lemma \ref{psfMinmax}, gains only a negligible 
amount, say $\epsilon_{neg}(n)$ 
(which may depend on the sequence of deviations), 
in phase (3). 
Let $u_i^n$ be the payoff of player $i$ in game $G^n$ of the sequence.
It suffices to show that 
\begin{align*}
\delta(n)(\sum_{k=0}^t
u^n_i(a_k)(1-\delta(n))^k+\sum_{k=0}^{m(n)}a(1-\delta(n))^{k+t}+(1-\delta(n))^{t+m(n)}\epsilon_{neg}(n))
-1/q(n)
%joe1: more standard
%\leq \\ 
\\ \leq 
\delta(n)(\sum_{k=0}^t
u^n_i(a_k)(1-\delta(n))^k+\sum_{k=t}^{\infty}u^n_i(a_k)(1-\delta(n))^k).
\end{align*}
By deleting the common terms from both side, rearranging,
and noticing that
 \mbox{$(1-\delta(n))^{m(n)}\epsilon_{neg(n)}\leq\epsilon_{neg(n)}$},
it follows that it suffices to show
\begin{align*}
\delta(n)(1-\delta(n))^t(\sum_{k=0}^{m(n)}a(1-\delta(n))^{k}+\epsilon_{neg}(n))-\frac{1}{q(n)} 
\leq
\delta(n)(1-\delta(n))^t(\sum_{k=0}^{\infty}u_i^n(a_{k+t})(1-\delta(n))^k).
\end{align*}
We divide both sides of the equation by $(1-\delta(n))^t$ .
No matter at what step of the sequence the defection happens, the future
expected discounted payoff from that point on is still 
at least $-2/3q(n)$, as our bound applies for the worst sequence for
a player, and we assumed that in equilibrium all players get at least $0$.
It follows that we need to show
\begin{align*}
\delta(n)(\sum_{k=0}^{m(n)}a(1-\delta(n))^k+\epsilon_{neg}(n))-\frac{1}{q(n)(1-\delta(n))^t}
\leq -\frac{2}{3q(n)}. 
\end{align*}
Since $\epsilon_{neg}$ is negligible  
for all deviations, it follows that, for all sequences of deviations, 
there exists $n_0$ such
that $\epsilon_{neg}(n)<1$ for all $n\geq n_0$.  For $n \geq n_0$,
%joe1: changed layout slightly
%\begin{align*}
$$\begin{array}{ll}
&\delta(n)(\sum_{k=0}^{m(n)}a(1-\delta(n))^k+\epsilon_{neg}(n))-\frac{1}{q(n)(1-\delta(n))^t}\\ \leq &\delta(n)(m(n)a+\epsilon_{neg}(n)) -\frac{1}{q(n)} \\
\leq &\frac{m(n)a+\epsilon_{neg}(n)}{f(n)} -\frac{1}{q(n)}\\
 \leq &\frac{m(n)a+\epsilon_{neg}(n)}{3q(n)(m(n)a+1)}-\frac{1}{q(n)}\\
 \leq &\frac{1}{3q(n)}-\frac{1}{q(n)}\\= &-\frac{2}{3q(n)}. 
%\end{align*}
\end{array}$$
This shows that there is no deviating strategy that can result in the
player gaining
more than $\frac{1}{q(n)}$ in $G^n$ for $n>\max\{n_0,n_1\}$.  
\end{proof}

%joe1
%\subsection{Variable number of players}\label{sec:graphical}
\subsection{Dealing with a variable number of players}\label{sec:graphical}
Up to now, we have assumed, just as in Borgs et al.~\citeyear{borgs2010myth},
that the number of players in
the game is a fixed constant ($\ge 3$).  

What happens if the
number of players in the game is part of the input?  
In general, describing the players' utilities in such a game takes   
space exponential in the number of players (since there are
exponentially many strategy profiles). 
Thus, to get interesting
computational results, we consider games that can be represented
succinctly.

Graphical games~\cite{KLS01} of degree $d$ are games that
can be represented by a graph in which each player is a node in the
graph, and the utility of a player is a function of only his action and
the actions of the players to which he is connected by an edge. The maximum
degree of a node is assumed to be at most $d$. This means a player's
punishment strategy depends only on the actions of at most $d$ players. 

\begin{definition}
Let $\G'_{a,b,d,n,m}$ be the set of all graphical games with degree at
most $d$, at most $m$ players and at most $n$ actions per player, 
integral payoffs,\footnote{Again, our result also holds for rational
payoffs, except then the size of the game needs to take into account the
bits needed to represent the payoffs.}
maximum payoff $a$, and minimum payoff $b$. 
\end{definition}

The following corollary then follows from our theorem, the fact that
a correlated equilibrium with polynomial sized-support can be computed
in polynomial time~\cite{jiang2011polynomial}, and the observation that
we can easily 
compute a correlated minimax strategy that depends only on the action
of at most $d$ players. 
\begin{cor}\label{VarPlayerEqExist}
For all $a$, $b$, $d$, and all polynomials $q$, there is a polynomial $f$
and a polynomial-time algorithm $F$
such that,
for all sequences $G_1, G_2, \ldots$ of games with $G^j\in G_{a,b,d,j,j}$
and for all inverse polynomials $\delta\leq1/f$,
the sequence of outputs of $F$ given the sequence $G_1, G_2,
\ldots$ of inputs is a $\frac{1}{q}$-equilibrium
%lior1
%for $G_1^{\infty}(\delta_1), G_2^{\infty}(\delta_2), \ldots$.
for $G_1^{\infty}(\delta(1)), G_2^{\infty}(\delta(2)), \ldots$.
\end{cor}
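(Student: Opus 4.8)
The plan is to reprove Theorem~\ref{eqExist} almost verbatim, changing only the three steps in which the constant-player argument silently relies on the quantity $n^c$ being polynomial. Throughout, write $n$ for the size of the $n$-th game in the sequence, so that both the number of players $c$ and the number of actions per player are at most $n$, and recall that constant degree $d$ makes the graphical representation of size polynomial in $n$. The algorithm $F$ is the same three-phase machine $\vec{M}^{\sigma^{\NE}}$, and I only need to argue that the sequence $\sq$, the punishment strategies, and the cryptographic reductions all remain polynomial when $c$ grows with $n$.

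The first change is in computing the correlated equilibrium. For a fixed number of players this was done by a linear program over all $n^c$ action profiles; with variable $c$ that program is exponentially large. Instead I would invoke the result of Jiang and Leyton-Brown~\cite{jiang2011polynomial}, which computes in polynomial time a correlated equilibrium $\sigma$ whose support has size $s(n)$ polynomial in $n$. Since minimax values are normalized to $0$, each player still has expected utility at least $0$ under $\sigma$. The second change is then immediate: the only place $n^c$ enters the proof of Lemma~\ref{sequence} is as the number of profiles over which the floor error in $\lfloor w(n)\sigma(\vec a)\rfloor$ accumulates, and all profiles outside the support contribute nothing. Replacing $n^c$ by $s(n)$ throughout that proof and setting $w(n)=((a-b)q(n)+1)s(n)$ yields a sequence $\sq$ of polynomial length whose per-player average payoff is within $1/q(n)$ of $p_i$. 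Lemmas~\ref{derandomize} and~\ref{negligible} then apply unchanged, as they depend only on the (now polynomial) length $w(n)$ and the payoff range.

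The third change is in computing the punishment strategies, and here I would use the degree bound directly: since player $j$'s utility is a function of $j$'s action and those of his at most $d$ neighbors, the quantity $\max_{a_j} E_{\vec\sigma_{-j}}[u_j(a_j,\vec a_{-j})]$ depends on $\vec\sigma_{-j}$ only through its marginal on the $\le d$ neighbor coordinates. Minimizing it is therefore a linear program over the $\le n^d$ joint neighbor-actions, which is polynomial because $d$ is constant; the resulting correlated punishment strategy has support of size at most $n^d$, and all non-neighbors of $j$ may play fixed arbitrary actions. Phase~3 can thus be implemented exactly as before, using $\PS_{\seed}$ to sample from this polynomial-support distribution.

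With these substitutions the remainder of the proof is unchanged. The reductions in Lemma~\ref{psfMinmax} call on multi-instance PRF security and multi-message multi-key encryption security with a number of keys/instances equal to the player count $c\le n$; since both notions hold for any polynomial number of keys and instances, allowing $c$ to be polynomial rather than constant does not affect them. The length $m(n)=\phi(n)+(c-2)z(n)$ of Phase~2 stays polynomial, so $f(n)=\max(3q(n)(m(n)a+1),f'(n))$ is again a polynomial and the final inequality chain of Theorem~\ref{eqExist} closes identically. The one genuine obstacle is precisely this polynomial-support issue: the earlier proof uses $n^c=\mathrm{poly}(n)$ for both the equilibrium's support and the sequence length, and the entire content of the corollary is to recover those two polynomial bounds---the equilibrium support from~\cite{jiang2011polynomial} and the punishment support from degree-$d$ locality.
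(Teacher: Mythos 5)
Your proposal is correct and follows essentially the same route as the paper, which justifies the corollary in one sentence by citing exactly the two ingredients you identify: a polynomial-support correlated equilibrium from Jiang and Leyton-Brown, and a correlated punishment strategy that depends only on the at most $d$ neighbors of the punished player. Your additional checks (replacing $n^c$ by the support size in Lemma~\ref{sequence}, and verifying that the multi-instance cryptographic reductions tolerate a polynomial number of players) are details the paper leaves implicit but are consistent with its argument.
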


\section{Computational subgame-perfect equilibrium}\label{sec:perfecteq}

\subsection{Motivation and Definition}
In this section we would like to define a notion similar to subgame-perfect equilibrium, where
for all histories $h$ in the game tree (even ones not on the equilibrium
path), playing $\vec{\sigma}$ restricted to the subtree starting at $h$
forms a NE.  
This means that a player does not have any incentive to deviate, no
matter where he finds himself in the game tree.

As we suggested in the introduction, there are a number of issues that
need to be addressed in
formalizing this intuition in our computational setting.  
First, since we consider stateful TMs, there is more to a descriptione
of a situation than just the history; we need to know the memory state
of the TM.  
That is, if we take a history to be just a sequence of actions, then the
analogue of history for us is really a pair $(h,\vec{m})$ consisting of
a sequence $h$ of actions, and a profile of memory states, one for each
player.  
Thus, to be a computational subgame-perfect equilibrium the strategies should be a NE at every history and no matter what the memory states are.

%Subgame perfection is still defined in games of imperfect information,
%but in many cases does not have much bite (see \cite{KW82} for a discussion on this point). 

%\footnote{Indeed, that is part of why notions like
%\emph{sequential equilibrium} \cite{KW82} are typically
%considered in games of imperfect information. }
% In our setting, 
%a ``situation'' includes the
%players' state of memory; after a deviation, players have no idea
%how the state of memory of other players may have changed.  Thus,
%the nodes in a player's information set are characterized by the
%possible memory states of the other players.  
%Since in a computational subgame-perfect 
%equilibrium, at each history for player $i$, player $i$ must make
%a best response no matter what the memory states of the other players
%are, it captures the strong requirement mentioned above.
%Despite this, we show that in a repeated game, a computational subgame-perfect $\epsilon$-eqilibrium
%exists and can be found efficiently.  

Another point of view is to say that
the players do not in fact have perfect information in our setting,
since we allow the TMs to have memory that is not observed by the other players, and thus the game should be understood as a game of imperfect information.
%Since a player's TM cannot observe the memory state of the other players'
%TMs, the computational game is best thought of as a game of imperfect
%information, where, 
In a given history $h$ where $i$ moves, $i$'s
information set consists of all situations where the history is $h$ and
the states of memory of the other players are arbitrary.
While subgame-perfect equilibrium extends to imperfect information games 
it usually doesn't have much bite (see \cite{KW82} for a discussion on this point).
For the games that we consider, subgame-perfect equilibrium typically reduces to NE.
An arguably more natural generalization of subgame-perfect equilibrium in
imperfect-information games would require that if an
information set for 
player $i$ off the equilibrium path is reached, then player $i$'s
strategy is a best response to the other players' strategies \emph{no
matter how that information set is reached}. This is quite a strong
requirement. (see \cite{OR94}[pp.~219--221] for a discussion of this
issue); such equilibria do not in general exist in games of
imperfect information.%

Instead, in games of imperfect information, the solution concept most commonly
used is \emph{sequential equilibrium} \cite{KW82}.
A sequential equilibrium is a pair $(\vec{\sigma},\mu)$ consisting of a
strategy profile $\vec{\sigma}$ and a \emph{belief system} $\mu$, where
$\mu$ associates with each information set $I$ a probability $\mu(I)$ on
the nodes in $I$.  Intuitively, if $I$ is an information set for player $i$,
$\mu(I)$ describes $i$'s beliefs about the likelihood of being in each
of the nodes in $I$.
Then $(\vec{\sigma}, \mu)$ is a sequential equilibrium if, for each
player $i$ and each information set $I$ for player $i$, $\sigma_i$ is a
best response to $\vec{\sigma}_{-i}$ given $i$'s beliefs $\mu(I)$.
However, a common criticism of this solution concept is that it is unclear
what these beliefs should be and how players 	
create these beliefs. 
Instead, our notion of computational subgame-perfection can be viewed as a 
strong version of a sequential equilibrium, where, for each player $i$
and each information set $I$ for $i$, $\sigma_i$ is a best
response to $\vec{\sigma}_{-i}$ conditional on reaching $I$ (up to
$\epsilon$) no matter what $i$'s beliefs are at $I$.

As a deviating TM can change its memory state in arbitrary ways,
when we argue that a strategy profile is an
$\epsilon$-NE at a history, we must also consider all possible states
that the TM might start with at that history. Since there exists a
deviation that 
just rewrites the memory in the round just before the history we are
considering, any memory state (of polynomial length) is possible. 
Thus, in the computational setting, we require that the TM's
strategies are an 
$\epsilon$-NE at every history, no matter what the states of the TMs are
at that history.
This solution concept is in the spirit of subgame-perfect equilibrium,
as we require that the strategies are a NE after every possible
deviation, although the player might not have complete information as to
what the deviation is.

Intuitively, a profile $\vec{M}$ of TMs is a computational subgame-perfect
equilibrium if for all players $i$, all histories $h$ where $i$ moves, and
all memory profiles $\vec{m}$ of the players, there is no
polynomial-time TM $\bar{M}$ such that player $i$ can gain more
than $\epsilon$ by switching from $M_i$ to $\bar{M}$.  
%Unfortunately, what we have just said is meaningless if we consider only
%a single game.  The notion of ``polynomial-time TM''
%does not make sense for a single game.  
To make it precise,
we must again consider an infinite sequence of games of increasing size
(just as we do for NE, although this definition is more complicated since we must consider memory
states). 

For a memory state $m$ and a TM $M$ let $M(m)$, stand for running $M$
with initial memory state $m$. 
We use $\vec{M}(\vec{m})$ to denote $(M_1(m_1),\ldots,M_c(m_c))$.
Let $p_i^{G,\delta}(\vec{M})$ denote player i's payoff in
$G^{\infty}(\delta)$ when $\vec{M}$ is played. 
\begin{definition} 
An infinite sequence of strategy profiles
$\vec{M}^1,\vec{M}^{2},\ldots$, where \mbox{$\vec{M}^k=(M^k_1,...,M^k_c)$}, is
a \emph{computational subgame-perfect $\epsilon$-equilibrium} of an
infinite sequence  
$G_1^{\infty},G_{2}^{\infty},\ldots$ of repeated games 
where the size of $G_k$ is $k$, if, for all players
$i\in [c]$, all sequences  \mbox{$h_1\in
H_{G_1^{\infty}},h_2\in H_{G_{2}^{\infty}},\ldots$} of histories,  
all sequences 
\mbox{$\vec{m}^1,\vec{m}^2,\ldots$} of polynomial-length memory-state
profiles, where 
\mbox{$\vec{m}^k=(m^k_1,\ldots,m^k_c)$}, 
and all non-uniform PPT
adversaries
$\bar{M}$ (polynomial in $k$ and 
$t$, as discussed above), 
there exists $k_0$ such that, for all $k\geq k_0$,
\begin{equation*}
p^{G_k^\infty(h_k),\delta}_i(\bar{M}(m^k_i),\vec{M}^k_{-i}(\vec{m}^k_{-i}))\leq
p^{G_k^\infty(h_k),\delta}_i(\vec{M}^k(\vec{m}^k))+\epsilon(k). 
\end{equation*}
\end{definition}

%We stress that our equilibrium notion considers only deviations
%that can be implemented by polynomial-time TMs.  This differs from
%the standard definition of NE (and from the definition considered in
%BC+).  But this difference is exactly what allows us to use
%cryptographic techniques.  It is also the reason that we need to consider a
%sequence of 
%games of growing sizes instead of a single game.

\subsection{Computing a subgame-perfect $\epsilon$-NE}
%Let \mbox{$A_i^0\subset A_i$} be a non-empty set and let \mbox{$A_i^1=A_i \setminus A_i^0$}.\footnote{We assume that each player has at least two actions in $G$. 
%This assumption is without loss of generality---we can essentially
%ignore players for whom it does not hold.}
%A player can broadcast an $m$-bit string by using his actions for $m$ rounds, by treating actions from $A_i^0$ as 0 and actions from $A_i^1$ as 1.
%Given a polynomial $\phi$ (with natural coefficients),
%let $(\Gen,\Enc,\Dec)$ be a multi-message multi-key secure $\phi$-bit,
%if the security parameter is $k$, the length of an encrypted message is
%$z(k)$ for some polynomial $z$. 
%Let $\sq=(s_1,s_2\ldots ,s_m)$ be a fixed sequence of action profiles.  
%Fix a polynomial-time pseudorandom function ensemble $\{\PS_s: s \in
%\{0,1\}^*\}$.  

For a game $G$ such that $|G|=n$, and a polynomial $\ell$, consider the
following   
strategy $\sigma^{\NE,\ell}$,
and let $\vec{M}^{\sigma^{\NE,\ell}}$ be the TMs that implement this strategy.
This strategy is similar in spirit to that proposed
in Section~\ref{sec:computingNE}; indeed, the first two phases are identical. The key difference is that the punishment phase is played for only
$\ell(n)$ rounds.  After that, players return to phase 1.  As we show,
this limited punishment is effective since it is not played long enough
to make it an empty threat (if $\ell$ is chosen appropriately).  Phase
4 takes care of one minor issue: 
The fact that we can start in any memory state
means that a player might be called on to do something that, in fact, he
cannot do (because he doesn't have the information required to do it). 
For example, he might be called upon to play the correlated punishment
strategy in a state where he has forgotten the random seed, so he cannot
play it. In this case, a default action is played. Note that his was not an issue in the analysis of NE.

\begin{enumerate}
\item Play according to $\sq$ (with wraparound) as long as
all players played according to $\sq$ in the previous round.
\item After detecting a deviation by player $j \ne i$
in round $t_0$:%
\footnote{Again, if more than one player deviates while playing $\sq$, the
players punish the one with the smaller index. The punished player
plays his best response to what the other players are doing in this phase.} 
	\begin{enumerate}
\item Generate a pair $(pk_i,sk_i)$ using
$\Gen(1^n)$.
Store $sk_i$ in memory and 
use the next $l(n)$ rounds to broadcast $pk_i$,
as discussed above.
\item If $i = j+1$ (with wraparound), player $i$ does the following: 
\begin{itemize}
\item $i$ records $pk_{j'}$ for all players $j' \notin \{i,j\}$;
\item $i$ generates a random $n$-bit seed $\seed$;
\item for each player $j' \notin \{i,j\}$, $i$ computes
$m=Enc_{pk_{j'}}(\seed)$, 
and uses the  next \mbox{$(c-2)z(n)$} rounds to communicate these strings
to the players other than $i$ and $j$ (in some predefined order). 
\end{itemize}
\item If $i \ne j+1$, player $i$ does the following:
\begin{itemize}
\item $i$ records the actions played by $j+1$ at time slots designated
for $i$ to retrieve $Enc_{Pk_{i}}(\seed)$;
\item $i$ decrypts to obtain $\seed$, using $Dec$ and $sk_i$.
\end{itemize}
\end{enumerate}
\item  Phase 2 ends after $ \phi(n)+(c-2)z(n)$ rounds.  The players 
other than $j$ then
compute $\PS_{\seed}(t)$ and use it to determine which action profile to
play according to the distribution defined by   
a fixed (correlated) punishment strategy against $j$. 
Player $j$ plays his best response to the correlated
punishment strategy throughout this phase.
After $\ell(n)$ rounds, they return to phase 1, playing the sequence $\sq$
from the point at which the deviation occurred (which can easily be
inferred from the history).  
\item If at any point less than or equal to $ \phi(n)+(c-2)z(n)$ time steps
  from the last deviation from phase 1 the situation is incompatible
  with phase 2 as described above (perhaps because further deviations
  have occurred), or at any point between $\phi(n)+(c-2)z(n)$ and $\phi(n)+(c-2)z(n) +
  \ell(n)$ steps since the last deviation from phase 1 the situation
  is incompatible with phase 3 as 
  described above,  play a fixed action 
for the number of rounds left to complete phases 2 and 3 (i.e., up to
$\phi(n)+(c-2)z(n) + \ell(n)$ steps from the last deviation from phase 1).
Then return to phase 1.
\end{enumerate}
Note that with this strategy a deviation made during the punishment phase
is not punished. Phase $2$ and $3$ are always played to their full
length (which is fixed and predefined by $\ell$ and $z$).
We say that a history $h$ is a phase 1 history if it is a history where an
honest player should play according to $\sq$. 
History $h$ is a phase 2 history if it is a history where at most
$\phi(n) + (c-2)z(n)$ rounds have passed since the last deviation from phase 1; $h$
is a phase 3 history if more than $\phi(n) + (c-2)z(n)$ but at most
$\phi(n) + (c-2)z(n) + \ell(n)$ rounds have passed since the last
deviation from phase 1.   No matter what happens in phase $2$ and $3$,
a history in which exactly $\phi(n)+(c-2)z(n) + \ell(n)$ round have passed since the
last deviation from phase 1 is also a phase 1 history (even if the players deviate
from phase 2 and 3 in arbitrary ways).  Thus, no matter how many
deviations occur, we can uniquely identify the phase of each round. 

We next show that by selecting the right parameters, these strategies
are easy to compute and are a subgame-perfect $\epsilon$-equilibrium for
all inverse polynomials $\epsilon$. 

\begin{definition}
Let $\G_{a,b,c,n}$ be the set of all games with $c$ players, at most $n$
actions per player, 
integral payoffs,
%\footnote{AS before, our result also hold for rational payoffs, 
%except then the size of the game needs to take into account the
%bits needed to represent the payoffs.}
maximum payoff $a$, and minimum payoff $b$.
%\footnote{By our assumption
%that the minimax payoff is 0 for all 
%players, we can 
%assume $a\ge0$, $b\le0$, and $a-b > 0$ (otherwise $a=b=0$, which makes
%the game uninteresting).} 
\end{definition}

We first show that for any strategy that deviates while phase
1 is played, there is a strategy whose payoff is at least as good and
either does not deviate in the first polynomially many rounds, or after its
first deviation, deviates every time phase 1 is played. (Recall that
after every deviation in phase 1, the other players play the
punishment phase for $\ell(n)$ rounds and then play phase 1 again.)

We do
this by showing that if player $i$ has a profitable deviation at some round
$t$ of phase 1, then it must be the case that every time this round of
phase 1 is played, $i$ has a profitable deviation there.  
(That is, the strategy of deviating every time this round of phase 1 is
played is at least as good as a strategy where player $i$ correlates his
plays in different instantiations of phase 1.)  
While this is trivial in traditional game-theoretic analyses,
naively applying it in the computational setting does not necessarily
work. It requires us to formally show how we 
reduce a polynomial time TM $M$ to a different TM $M$'
of the desired form without blowing up the running time and size of the TM.

For a game $G$, let $H_{G^{\infty}}^{1,n,f}$ be the set of histories 
$h$ of $G^{\infty}$ of length at most $nf(n)$ such that
at (the last node of) $h$, $\sigma^{\NE,\ell}$ is in phase 1.
Let $R(M)$ be the polynomial that bounds the running time of TM $M$.

\begin{definition}
Given a game $G$, a deterministic TM $M$ is said to be
$(G,f,n)$-\emph{well-behaved} 
if, when $(M,\sigma^{\NE,\ell}_{-i})$ is played, then either $M$ does
not deviate for the first $nf(n)$ rounds or, after
$M$ first deviates,
$M$ continues to deviate from $\sq$ every time phase 1 is
played in the next $nf(n)$ rounds. 
\end{definition}

\begin{lemma}\label{alwaysDeviate}
For all $a$, $b$, $c$, and all polynomials $f$, there exists a polynomial
$g$ such that for all $n$, all games $G\in 
\G_{a,b,c,n}$, all $h\in H_{G^{\infty}}^{1,n,f}$, all
players $i$, 
and all TMs $M$, there exists a
$(G(h),f,n)$-well-behaved TM M' such that 
$p_i^{G^h,1/f(n)}(M',\vec{M}^{\sigma^{\NE,\ell}}_{-i})\geq p_i^{G^h,1/f(n)}(M,\vec{M}^{\sigma^{\NE,\ell}}_{-i})$,
and $R(M'),|M'|\leq g(R(M))$.
\end{lemma}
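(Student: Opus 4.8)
The plan is to exploit the \emph{reset} structure of the honest strategy $\vec{M}^{\sigma^{\NE,\ell}}_{-i}$. Phases~2 and~3 have a fixed, history-independent length ($\phi(n)+(c-2)z(n)$ and $\ell(n)$, respectively), and once a full punishment cycle completes the honest players return to phase~1 and resume $\sq$ from exactly the position at which the deviation occurred, using a freshly and independently generated seed. Consequently, from player $i$'s point of view the continuation behaviour of the others depends only on the current \emph{phase-1 position} $p\in\{0,\dots,|\sq|-1\}$ and on nothing else about the past. This turns $i$'s best-response problem into a discounted decision problem on a finite set of positions in which, at position $p$, playing $\sq[p]$ (``follow'') moves to position $p+1$, while any deviation triggers a cycle of fixed length after which the state returns to $p$.

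Two preliminary reductions come first. Since well-behavedness is defined for deterministic machines, I fix $M$'s coin tosses to the value that maximizes $i$'s payoff and hardwire them as non-uniform advice; this only determinizes $M$ and cannot decrease its payoff. Second, by Lemma~\ref{negligible} everything beyond round $nf(n)$ contributes at most $a/e^n$, so only the first $nf(n)$ rounds — and hence only finitely many punishment cycles — are relevant, and in that window $M$ exhibits only finitely many distinct deviation-cycle behaviours.

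For the inequality $p_i(M')\ge p_i(M)$, observe that in the discounted decision problem above the optimal policy is stationary; because a deviation returns the state to the same position, once the optimal policy deviates at the first position $p^\ast$ it reaches at which deviating beats following, it remains at $p^\ast$ and deviates in \emph{every} subsequent phase-1 round — which is precisely the second disjunct of well-behavedness (and if it never deviates, the first disjunct holds). The catch is computational: the genuinely optimal cycle play is not realizable within a time bound independent of $i$'s own resources (a more powerful machine can always do marginally better during phases~2 and~3, e.g.\ by trying to predict the pseudorandom punishment of Lemma~\ref{psfMinmax}), so I cannot simply compute it. Instead I optimize only over the cycle behaviours that are realizable by \emph{simulating} $M$: for each phase-1 position $p$ that $M$ visits, I record, by internal simulation of $M$ on the appropriate history, $M$'s most profitable deviation-cycle behaviour at $p$, and let $M'$ follow $\sq$ until the best such $p^\ast$ and then replay that recorded behaviour in every cycle thereafter. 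Every behaviour $M$ itself uses lies in this class, so the stationary optimum over the class is at least $p_i(M)$; and since each cycle faces an identically distributed subgame, replaying a fixed behaviour attains that optimum. The one point needing care for an exact (rather than up-to-negligible) inequality is reconciling the finite-window reduction with the infinite-horizon value, which the reset structure makes possible.

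The main obstacle — and, as the authors flag, the delicate part — is the complexity bound $R(M'),|M'|\le g(R(M))$. The machine $M'$ must (i) simulate $M$ to locate its first deviation and to extract and compare the per-position cycle behaviours, (ii) store the single selected behaviour, and (iii) replay it each time the others reset to phase~1, all while re-initializing its simulated copy of $M$ to the correct memory state at the start of each cycle and keeping the per-round running time polynomial. The budget is tight: $M'$ may not run $M$ for more than polynomially many steps per round, and its description must not grow with the round index. The work is to show that recording one cycle's worth of $M$-behaviour, together with the fixed bookkeeping needed to detect phases and positions and to loop, inflates size and running time only polynomially in $R(M)$ (and in $n$), yielding the required polynomial $g$.
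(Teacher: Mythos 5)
Your proposal is correct in its essentials and rests on the same key observation as the paper's proof---that $\vec{M}^{\sigma^{\NE,\ell}}_{-i}$ is ``memoryless'' across punishment cycles, returning to the same position of $\sq$ with fresh randomness, so that repeating forever the best single-cycle deviation behaviour that $M$ itself realizes does at least as well as $M$---but you organize the exchange argument differently. The paper never invokes stationarity of optimal policies in a discounted decision problem; instead it builds an explicit finite sequence of machines $M_1,M_2,\ldots$, each obtained from the last by one of two local surgeries: a ``type 1'' deviation that saves $M$'s state at its first deviation and re-injects it at the first phase-1 round where $M$ would have stopped deviating (so the machine deviates again there), and, when that surgery hurts, a ``type 2'' deviation that jumps directly to $M$'s state at a witness history $h^*$ where honesty beat the surgery, thereby postponing the first deviation. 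Each step weakly improves the payoff, and the crucial structural lemma is that the sequence can be taken to consist of all type-2 steps followed by all type-1 steps; this reordering is what delivers the bound $R(M'),|M'|\le g(R(M))$, since the terminal machine needs only one hardwired switch-round and one hardwired state, with the simulated history offset by at most $(nf(n))^2$. Your direct construction arrives at the same final object (follow $\sq$ to a hardwired position, then replay a hardwired state of $M$ each cycle, resetting the simulated copy), and in fact with a smaller offset, but two points must be made explicit for it to close. First, the selection of $p^*$ and of the ``most profitable'' recorded cycle behaviour cannot be performed by $M'$ at run time: comparing expected cycle payoffs requires evaluating expectations over the other players' coins, which a PPT machine cannot do; the choice must be made non-constructively in the proof and hardwired into $M'$ (the lemma asserts only existence, and this is exactly how the paper's terminal machine is obtained). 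Second, the complexity bound is the part the authors flag as delicate, and your proposal states what must be shown rather than showing it; you still need to verify that the hardwired history and state have length polynomial in $nf(n)$ and $R(M)(nf(n))$ and that the per-round simulation overhead is $O(R(M))$, which is precisely the work the paper's reordering lemma is doing.
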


\begin{proof}
Suppose that we are given $G \in \G_{a,b,c,n}$, $h\in
H_{G^{\infty}}^{1,n,f}$, and a TM $M$.
We can assume without loss of generality that $M$ is
deterministic
(we can always just use the best random tape).
If $M$ does not deviate in the first $nf(n)$ rounds of $G(h)^{\infty}$
then $M'$ is 
just $M$, and we are done.  
Otherwise, we construct a sequence of TMs starting with $M$ that are, in a
precise sense, more and more well behaved, until eventually we get the
desired TM $M'$.

For $t_1 < t_2$, say that $M$ is
\emph{$(t_1,t_2)$-$(G,f,n)$-well-behaved} if $M$ does not deviate from $\sq$
until round $t_1$, and then deviates from $\sq$ every time phase 1 is
played up to (but not including) round $t_2$
(by which we mean there exists some history in which $M$ does not
deviate at round $t_2$ and this is the shortest such history
over all possible random tapes of $\vec{M}^{\sigma^{\NE,\ell}}_{-i}$).
We construct a sequence $M_1, M_2, \ldots$ of TMs such that (a) $M_1 = M$,
(b) $M_i$ is $(t_1^i,t_2^i)$-$(G,f,n)$-well-behaved, 
(c) either $t_1^{i+1} >t_i$ or $t_1^{i+1} = t_1^i$ and $t_2^{i+1} > t_2^{i}$, 
and
(d) $p_i^{G^h,1/f(n)}(M_{i+1},\vec{M}^{\sigma^{\NE,\ell}}_{-i}) \ge
p_i^{G^h,1/f(n)}(M_i,\vec{M}^{\sigma^{\NE,\ell}}_{-i})$.
Note that if
$t_1 \ge nf(n)$ or $t_2 \ge t_1 + nf(n)$, then a
$(t_1,t_2)$-$(G,f,n)$-well-behaved TM is $(G,f,n)$-well-behaved.

Let $t<nf(n)$ be the first round at which $M$ deviates.
(This is well defined since the play up to $t$ is deterministic.) 
Let the history up to time $t$ be $h^t$.
If $M$ deviates every time that phase 1 is played for the $nf(n)$ rounds
after round $t$, then again we can take $M' = M$, and we are done.  
If not, let $t'$ be the first round after $t$ at which phase 1 is played and 
there exists some history of length $t'$ at which
$M$ does not deviate.  
By definition, $M$ is $(t,t')$-$(G,f,n)$-well
behaved. We take $M_1 = M$ and $(t_1^1,t_2^1) = (t,t')$. 
(Note that since $\vec{M}^{\sigma^{\NE,\ell}}_{-i}$ are randomized during phase 2, the first
time after $t$ at 
which $M$ returns to playing phase 1 and does not deviate may depend
on the results of their coin tosses.  We take $t'$ to be the first time this
happens with positive probability.)

Let $s^{h^*}$ be $M$'s memory state at a history $h^*$.
We assume for ease of exposition that $M$ encodes the history in its
memory state.  (This can be done, since the memory state at time $t$ 
is of size polynomial in $t$.) 
Consider the TM $M''$ that acts like $M$ up to round $t$, and copies $M$'s
memory state at that round (i.e., $s^{h^t}$). 
$M''$ continues to plays like $M$ up to the first round $t'$ with $t <
t' < t
+nf(n)$ at which
$\sigma^{NE,\ell}$ would be about to return to phase 1 
and $M$ does not deviate 
(which means
that $M$ plays an action in the sequence $\sq$ at round $t'$).  
At round $t'$, $M''$ sets its state to
$s^{h^t}$ and simulates $M$ from history $h^t$ with states $s^{h(t)}$;
so, in particular, $M''$ does deviate at time $t'$.  (Again, the time
$t'$ may depend on random choices made by $\vec{M}^{\sigma^{\NE,\ell}}_{-i}$.  We assume that $M''$
deviates the first time $M$ is about to play phase 1 after round $t$
and does not deviate, no matter what the outcome of the coin tosses.)
This means, in particular, that $M''$ deviates at any such $t'$.
We call $M''$ a \emph{type 1 deviation from $M$}.

If $p_i^{G^{h^t},1/f(n)}(M'',\vec{M}^{\sigma^{\NE,\ell}}_{-i}) >
p_i^{G^{h^t},1/f(n)}(M,\vec{M}^{\sigma^{\NE,\ell}}_{-i})$, then we
take $M_2 = M''$.  Note that $t^2_1 =t^1_1 = t$, while $t^2_2 > t^1_2 =
t'$, since $M''$ deviates at $t'$.
If $p_i^{G^{h^t},1/f(n)}(M'',\vec{M}^{\sigma^{\NE,\ell}}_{-i}) <
p_i^{G^{h^t},1/f(n)}(M,\vec{M}^{\sigma^{\NE,\ell}}_{-i})$,  
then there exists 
some history $h^*$ of both $M$ and $M''$ such that $t < |h^*|<t+nf(n)$,
$M''$ deviates at $h^*$, 
$M$ does not,
and $M$ has a better expected payoff than $M''$ at $h^*$.
(This is a history where the type 1 deviation failed to improve the payoff.)
Take $M_2$ to be the TM that plays like $\vec{M}^{\sigma^{\NE,\ell}}_i$ up to time
$t$, then 
sets its state to $s^{h^{*}}$, and then plays like $M$ with state
$s^{h^{*}}$ in history $h^{*}$.  
We call $M_2$  a \emph{type 2 deviation from $M$}.
Note that $M_2$ does
not deviate at $h^t$ (since $M$ did not deviate at history $h^{*}$).
Let $\delta'=(1-\delta)^{|h^*|-|h^t|}$.
Clearly $\delta' p_i^{G^{h^t},1/f(n)}(M_2,\vec{M}^{\sigma^{\NE,\ell}}_{-i}) = 
p_i^{G^{h^{*}},1/f(n)}(M,\vec{M}^{\sigma^{\NE,\ell}}_{-i})$,
since $\vec{M}^{\sigma^{\NE,\ell}}_{-i}$ acts the same in $G^{h^t}$ and
$G^{h^{*}}$. 
Since $M''$ plays like $M(s^{h_t})$ at $h^*$,  
$p_i^{G^{h^{*}},1/f(n)}(M'',\vec{M}^{\sigma^{\NE,\ell}}_{-i})=\delta'
p_i^{G^{h^t},1/f(n)}(M,\vec{M}^{\sigma^{\NE,\ell}}_{-i})$. Combining
this with the previous observations, we get that
$p_i^{G^{h^t},1/f(n)}(M_2,\vec{M}^{\sigma^{\NE,\ell}}_{-i}) \geq  
p_i^{G^{h^{t}},1/f(n)}(M,\vec{M}^{\sigma^{\NE,\ell}}_{-i})$.
Also note that $t_1^2>t_1^1$. 
This completes the construction of $M_2$.  
We inductively construct $M_{i+1}$, $i = 2, 3, \ldots$, just as we did $M_2$,
letting $M_i$ play the role of $M$.

Next observe that, without loss of generality, we can assume that
this sequence arises from a sequnce of type $2$ 
deviations, followed by a sequence of type $1$ deviations:
For let $j_1$ be the first point in the sequence at which a type $1$
deviation is made.  We claim that we can assume without loss of
generality that all further deviations are type 1 deviations.  
By assumption, since $M_{j_1}$ gives $i$ higher utility than $M_{j_1 -  1}$, 
it is better to deviate the first time $M_{j_1-1}$ wants to
play phase 1 again after an initial deviation. This means that when
$M_{j_1}$ wants to play phase 1 again after an initial deviation it
must be better to deviate again, since the future play of the
$\vec{M}^{\sigma^{\NE,\ell}}_{-i}$ is the same in both of these
situations. 
This means that once a type $1$ deviation occurs, we can assume that
all further deviations are type 1 deviations.

Let $M_j$ be the first TM in the sequence that is well behaved.  (As
we observed earlier, there must be such a TM.)  
Using the fact that the sequence consists of a sequence of type 2
deviations followed by a sequence of type 1 deviations, 
it is not hard to show that $M_j$ can be implemented efficiently. First
notice that $M_{j_1}$ is a TM that plays like
$\vec{M}^{\sigma^{\NE,\ell}}_i$ until some round, and then plays $M$
starting with its state at a history which is at most $(nf(n))^2$
longer than the real history at this point. This is because its
initial history becomes longer by at most $nf(n)$ at each round and we
iterate this construction at most $nf(n)$ times. This means that its
running time is obviously polynomially related to the running time of
the original $M$. 
The same is true of the size of $M_{j_1}$, since we need to 
encode only the state at this initial history and the history at which we
switch, which is polynomially related to $R(M)(n)$. 

To construct $M_j$, we need to modify $M_{j_1}$ only slightly, since
only type $1$ deviations occur. Specifically, we need to
know only $t_{j_1}^1$ and to encode its state at this round. At every history
after that, we run $M_{J_1}$ (which is essentially running $M$ on a
longer history) on a fixed history, with a potential additional step of
copying the state. It is easy to see that the resulting TM has running time and
size at most $O(R(M))$.  
\end{proof}

We now state and prove our theorem, which shows that there exists a
polynomial-time algorithm for computing a subgame-perfect
$\epsilon$-equilibrium by showing that, for all inverse 
polynomials $\epsilon$, 
there exists a polynomial function $\ell$ of $\epsilon$ such that 
$\sigma^{\NE^*,\ell}$ is a 
subgame-perfect $\epsilon$-equilibrium of the game.
The main idea of the proof is to show that the players can't gain much
from deviating while the sequence is being played, and also that, since
the punishment is relatively short, deviating while a player is being
punished is also not very profitable.  

\begin{theorem}\label{perfectEqExist}
For all $a$, $b$, $c$, and all polynomials $q$, there is a polynomial $f$
and a polynomial-time algorithm 
$F$ such that, 
for all sequences $G_1, G_2, \ldots$ of games with $G^j\in G_{a,b,c,j}$
and for all inverse polynomials $\delta\leq1/f$,
the sequence of outputs of $F$ given the sequence $G_1, G_2,
\ldots$ of inputs is a subgame-perfect $\frac{1}{q}$-equilibrium
%lior1
%for $G_1^{\infty}(\delta_1), G_2^{\infty}(\delta_2), \ldots$.
for $G_1^{\infty}(\delta(1)), G_2^{\infty}(\delta(2)), \ldots$.
\end{theorem}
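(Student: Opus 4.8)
The plan is to instantiate the algorithm $F$ exactly as in the proof of Theorem~\ref{eqExist}: compute a correlated equilibrium $\sigma$ of $G^n$ by linear programming, convert it into a fixed sequence $\sq$ via Lemma~\ref{sequence}, compute correlated punishment strategies by linear programming, and output the machines $\vec{M}^{\sigma^{\NE,\ell}}$ implementing the limited-punishment strategy. The only genuinely new work is choosing the polynomial $\ell$ (the punishment length) and the discount bound $f$ so that two competing requirements hold at once: $\ell(n)$ must be large enough that a player who deviates during phase~1 is deterred, while $\delta\le 1/f(n)$ must be small enough that the bounded punishment (phases~2 and~3, of total length $\ell(n)+m(n)$, where $m(n)=\phi(n)+(c-2)z(n)$) never lets any player gain more than $\eps=1/q(n)$. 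I would fix a player $i$, a history $h_k$, an adversarial memory profile, and a deviation $\bar M$; by Lemma~\ref{negligible} only the first $\lceil n/\delta\rceil$ rounds matter up to an additive $a/e^n$, so I may argue in the truncated game. Since the phase of every round is uniquely determined by the history (as noted when $\sigma^{\NE,\ell}$ was defined), I split the analysis according to the phase of $h_k$.

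\emph{Phase~1.} Here I invoke Lemma~\ref{alwaysDeviate} to replace $\bar M$ by a $(G(h_k),f,n)$-well-behaved TM of comparable size and running time that does at least as well; such a TM either never deviates (gain $0$) or deviates every time phase~1 recurs. In the latter case each deviation buys $i$ at most one discounted round of excess payoff, at most $a-b$, but then locks him into $\ell(n)+m(n)$ rounds (during which a freshly generated seed makes the continuation no better than his honest payoff, by the argument of Lemma~\ref{psfMinmax}) before the next opportunity. Summing the resulting geometric series of deviation opportunities, and using $\delta(\ell(n)+m(n))\le 1$, bounds the total excess by $2(a-b)/(\ell(n)+m(n))$, which is below $\eps$ once $\ell(n)\ge 2(a-b)q(n)$.

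\emph{Phases~2 and~3.} The key observation is that deviations during these phases are never themselves punished and the phases always run to their fixed length, so the most any player can gain in a single punishment phase is its total excess payoff over $\ell(n)+m(n)$ rounds, namely at most $\delta(\ell(n)+m(n))(a-b)$. This uniform bound already covers the delicate case created by the subgame-perfect quantification over all memory states: even if the punished player $j$ begins the subgame with the seed (or a secret key) hardwired into his memory, so that he can predict and best-respond to the entire current punishment, his advantage lasts only for the remaining rounds of this one phase, since play then returns to phase~1 and every subsequent punishment draws a fresh seed. For those future punishments $j$ has no hardwired advantage, so Lemma~\ref{psfMinmax} applies essentially verbatim and caps each future gain at a negligible amount. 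Choosing $f(n)\ge 3(\ell(n)+m(n))(a-b)q(n)$ forces $\delta(\ell(n)+m(n))(a-b)\le \eps$, so no phase-2 or phase-3 gain exceeds $\eps$.

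Finally I would combine the three bounds: since returning to phase~1 after any punishment leaves the players in a configuration again governed by the same estimates, the total discounted excess available at $h_k$ is dominated by the single phase $i$ currently occupies plus the amortized phase-1 term, each at most $\eps$ up to negligible corrections, with the $a/e^n$ error of Lemma~\ref{negligible} and the discounting error of Lemma~\ref{derandomize} absorbed into the threshold $k_0$ (this also requires taking $f$ to dominate the bound $f'$ of Lemma~\ref{derandomize}). The main obstacle is exactly the tension just isolated: $\ell$ must grow with $q$ to deter phase-1 deviations, yet $\delta\ell$ must stay small to neutralize the ``hardwired secret'' deviation that the subgame-perfect definition forces us to consider. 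The resolution is that both can be met by taking $\ell$ polynomial in $q$ and then $f$ polynomial in $\ell$, so $F$ still runs in polynomial time and the conclusion holds for every inverse-polynomial $\eps$.
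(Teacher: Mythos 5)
Your overall architecture is the paper's: same algorithm (LP for the correlated equilibrium and the punishment strategies, Lemma~\ref{sequence} for $\sq$, output $\vec{M}^{\sigma^{\NE,\ell}}$), same reduction of phase-1 deviations to well-behaved TMs via Lemma~\ref{alwaysDeviate}, same geometric-series accounting over deviation--punishment cycles, same treatment of the ``hardwired secret'' problem (a stale key helps only for the current bounded punishment phase, after which fresh keys are drawn), and the same two-sided tension resolved by taking $\ell$ polynomial in $q$ and then $f$ polynomial in $\ell$.

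There is, however, a genuine quantitative gap in your phase-1 bound. You claim each deviation ``buys at most one discounted round of excess payoff, at most $a-b$, but then locks him into $\ell(n)+m(n)$ rounds during which a freshly generated seed makes the continuation no better than his honest payoff, by the argument of Lemma~\ref{psfMinmax}.'' Lemma~\ref{psfMinmax} controls only phase~3. During the $m(n)$ rounds of phase~2 the non-deviating players are not punishing at all: they are spending those rounds broadcasting public keys and relaying the encrypted seed, playing actions dictated by the bit-encoding rather than by any punishment distribution. The deviator can best-respond to this and collect up to $a$ per round for all $m(n)$ rounds, so the per-cycle excess is of order $m(n)\,a$, not $a-b$. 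Consequently your choice $\ell(n)\ge 2(a-b)q(n)$ is too small: the amortized excess is roughly $m(n)a/(\ell(n)+m(n))$, and to push this below $1/q(n)$ you need $\ell$ to scale with $m(n)$ as well, e.g.\ $\ell(n)=nq(n)(m(n)a+1)$ as the paper takes (with $f(n)=\max(3rq(n)(\ell(n)+m(n)),f'(n))$, $r=a-b$). The fix is purely a matter of constants and does not disturb polynomial-time computability, but as written the deterrence claim for phase-1 deviations does not follow. The remainder of your argument (the phase-2/3 bound of $\delta(\ell(n)+m(n))(a-b)$, the use of Lemma~\ref{negligible} to truncate, and the requirement $f\ge f'$) matches the paper and is sound.
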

\newenvironment{proofsketch}{\trivlist\item[]\emph{Proof Sketch}:}%
{\unskip\nobreak\hskip 1em plus 1fil\nobreak$\Box$
\parfillskip=0pt%
\endtrivlist}
\begin{proof}
Given a game $G^n \in \G(a,b,c,n)$,
the algorithm finds a correlated equilibrium
$\sigma$ of $G^n$, which can be done in polynomial time using linear
programming. Each player's expected payoff is at least
$0$ when playing $\sigma$, since we assumed that the minimax value of the game
is $0$. 
Let $r=a-b$.  By Lemma~\ref{sequence} and Lemma~\ref{derandomize}, we can construct a sequence $\sq$ of
length $w(n)=4(r n q(n)+1)n^c$ and set $f'(n)=4r w(n)q(n)$, so that if
the players play $\sq$ infinitely often and $\delta<1/f'(n)$, then all the
players get at least $-1/2q(n)$.  
The correlated punishment strategy against each player can also be found in polynomial time
using linear programming. 

Let $m(n)$ be the length of phase 2, including the round where the
deviation occurred.    
(Note that $m(n)$ is a polynomial that depends
only on the choice of encryption scheme---that is, it depends on $\phi$,
where a $\phi$-bit public-key encryption scheme is used, and on $z$, where
$z(k)$ is the length of encrypted messages.) 
Let $\ell(n)=nq(n)(m(n)a+1)$, let $\sigma^{*}_n$ be the strategy
$\vec{M}^{\sigma^{\NE,\ell}}$ described 
above, and let \mbox{$f(n)=\max(3rq(n)(\ell(n)+m(n)),f'(n))$}.

We now show that $\sigma^{*}_{1},\sigma^{*}_{2},\ldots$ is a subgame-perfect
$(1/q)$-equilibrium 
for every inverse polynomial discount factor $\delta\leq 1/f$. 
We focus on deviations at histories of length $<\frac{n}{\delta(n)}$, since, by
Lemma \ref{negligible}, the sum of payoffs received after that is
negligible.  Thus, there exists some $n_0$ such that, for all $n>n_0$,
the payoff achieved 
after that history is less than $1/q(n)$, which does not justify deviating.

We first show that no player has an incentive to deviate in subgames
starting from phase 1 histories.
By Lemma \ref{alwaysDeviate}, it suffices to consider
only a deviating strategy 
that after its first deviation deviates every time phase 1 is played;
for every deviating strategy, either not deviating does at least as well
or there is a deviating strategy of this form that does at least as well.
Let $h_1$ be the history in which the deviation occurs and let $M$ be the deviating strategy. Notice that
$\vec{M}^{\sigma^{\NE,\ell}}$ can always act as intended at such
histories; it can detect it is in such a history and can use the
history to compute the next move (i.e., it does not need to maintain
memory to figure out what to do next).

The player's payoff from $(M,\vec{M}^{\sigma^{\NE,\ell}}_{-i})$ during
one cycle of deviation and punishment can be at 
most $a$ at each round of phase 2 and, by Lemma~\ref{psfMinmax}, is
negligible throughout phase 3.
(We use $\epsilon_{neg}$ to denote the negligible payoff to a deviator
in phase 3.)
Thus, the payoff of the deviating player from
$(M,\vec{M}^{\sigma^{\NE,\ell}}_{-i})$ from the point of deviation
onwards is at most 
\begin{align*}
((1-\delta(n)^{|h_1|})\big(\delta(n)(m(n)a+\epsilon_{neg})\sum_{t=0}^{\lceil\frac{nf(n)-|h_1|}{m(n)+\ell(n)}\rceil}(1-\delta(n))^{(m(n)+\ell(n))t}+\epsilon'_{neg}\big)\\
\leq ((1-\delta(n)^{|h_1|})\big(\delta(n)(m(n)a+\epsilon_{neg})\sum_{t=0}^{\infty}(1-\delta(n))^{(m(n)+\ell(n))t}+\epsilon'_{neg}\big),
\end{align*}
where $\epsilon'_{neg}$ is the expected payoff after round $nf(n)$.
By Lemma~\ref{sequence}, no matter where in the sequence the
players are, the average discounted payoff at that point from playing honestly is at least
$-1/2q(n)$. Thus, the payoff from playing
$(\vec{M}^{\sigma^{\NE,\ell}})$ from this point onwards is at least 
$-(1-\delta(n))^{|h_1|})1/2q(n).$
We can ignore any payoff before the deviation since it is the same
whether or not the player deviates.
and also divide both sides by $(1-\delta(n))^{|h_1|})$; thus,
it suffices to prove that
$$\delta(n)(m(n)a+\epsilon_{neg})\sum_{t=0}^{\infty}(1-\delta(n))^{(m(n)+\ell(n))t}+\epsilon'_{neg}\leq 
\frac{1}{2q(n)}.$$ 
The term on the left side is bounded by
$O\big(\frac{m(n)a+\epsilon_{neg}}{nq(n)(m(n)a+1)}\big)$, and thus there
exists $n_1$ such that, for all $n>n_1$, the term on the left side is
smaller than 
$\frac{1}{2q(n)}$
(In fact, for all constants $c$, there exists $n_c$ such that 
the left-hand side is at most $\frac{1}{cq(n)}$ for any $n > n_c$.)

We next show that no player wants to deviate in phase 2 or 3 histories.
Notice that since these phases are carried out to completion even if
the players deviate while in these phases (we do not punish them for
that), and the honest strategy can easily detect whether it is in such a
phase by looking at when the last deviation from phase 1
occurred. 
First consider a punishing player. By not following the strategy, he can
gain at most $r$ for at most $\ell(n)+m(n)$ rounds over the payoff he
gets with
the original strategy 
(this is true even if his memory state is such that he just plays a
fixed action, or even if another player deviates while the phase is played).  
Once the players start playing phase 1 again,
our previous claim shows that no matter what the actual history is at
that point, a strategy that does not follow the sequence does not gain
much. 
It is easy to verify that, given the discount factor, a
deviation can increase his discounted payoff by at most
$\frac{1}{q(n)}$ in this case. 
(Notice that the previous claim works for any constant fraction of
$1/q(n)$, which is what we are using here since the deviation in the
punishment phase gains $1/cq(n)$ for some $c$.)

The punished player can deviate to a TM that correctly guessed
the keys chosen 
(or the current TM's memory state might contain the actual keys and he
defects to a
TM that uses these keys) , 
in which case he would know exactly what the players are going to
do while they are punishing him. Such a deviation exists once the keys
have been played and are part of the history.
Another deviation might be a result of the other TMs being in an
inconsistent memory state, so that they play a fixed action, one which
the deviating player might be able to take advantage of. 
However, these deviations work (or any other possible deviation) only for the current punishment phase.
Once the players go back to playing phase 1, 
this player can not gain much by deviating from the sequence
again. 
For if he deviates again, the other players will choose new random
keys and a new random seed
(and will have a consistent memory state); 
from our previous claims, this means that no strategy can gain
much over a strategy that follows the sequence.  
Moreover, he can also 
gain at most $r$ for at most $\ell(n)+m(n)$ rounds which, as claimed
before, means that his discounted payoff difference is less than
$\frac{1}{q(n)}$ in this case.  

This shows that, for $n$ sufficiently large, no player can gain more than
$1/q(n)$ from deviating at any history.  Thus, this strategy is 
a subgame-perfect $1/q$-equilibrium.
\end{proof}

%joe1*: we should probably state a theorem here
%We note that for the same reasons as in Section~\ref{sec:graphical}
%the same ideas can work also for constant-degree graphical games. 
Using the same arguments as in Section~\ref{sec:graphical}, we can
also apply these ideas to efficiently find a computational
subgame-perfect $\epsilon$-equilibrium in constant-degree graphical games.

\begin{cor}\label{VarPlayerPerEqExist}
For all $a$, $b$, $d$, and all polynomials $q$, there is a polynomial $f$
and a polynomial-time algorithm $F$
such that,
for all sequences $G_1, G_2, \ldots$ of games with $G^j\in G_{a,b,d,j,j}$
and for all inverse polynomials $\delta\leq1/f$,
the sequence of outputs of $F$ given the sequence $G_1, G_2,
\ldots$ of inputs is a subgame-perfect $\frac{1}{q}$-equilibrium
%lior1
%for $G_1^{\infty}(\delta_1), G_2^{\infty}(\delta_2), \ldots$.
for $G_1^{\infty}(\delta(1)), G_2^{\infty}(\delta(2)), \ldots$.
\end{cor}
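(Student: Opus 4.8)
The plan is to mirror the proof of Theorem~\ref{perfectEqExist}, replacing only the three ingredients whose computation scales with the number of players by their graphical-game analogues, exactly as was done in passing from Theorem~\ref{eqExist} to Corollary~\ref{VarPlayerEqExist} in Section~\ref{sec:graphical}. For an input game $G^j\in\G'_{a,b,d,j,j}$ (a degree-$d$ graphical game on at most $j$ players), the algorithm $F$ again outputs the profile $\vec{M}^{\sigma^{\NE,\ell}}$, with punishment length $\ell(n)=nq(n)(m(n)a+1)$ and $f(n)=\max(3rq(n)(\ell(n)+m(n)),f'(n))$ (where $r=a-b$ and $m(n)$ is the length of phase~2), chosen just as in Theorem~\ref{perfectEqExist}.

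Three substitutions are needed. First, in place of computing an arbitrary correlated equilibrium by linear programming, I would use the result of \cite{jiang2011polynomial} that a correlated equilibrium with \emph{polynomial-sized support} can be found in polynomial time. This is exactly what keeps the construction of Lemma~\ref{sequence} polynomial: the sequence $\sq$ plays each profile in the support of $\sigma$ a number of times proportional to its probability, so a polynomial support bounds $|\sq|=w(n)$ by a polynomial (replacing the $n^c$ factor, which would be exponential in the number of players), and hence $f'(n)$ is polynomial as well. Second, for the punishment strategy I would use the observation of Corollary~\ref{VarPlayerEqExist} that in a degree-$d$ graphical game the correlated minimax strategy against a player $j$ depends only on the actions of $j$'s at most $d$ neighbors, so it is the solution of a constant-size linear program. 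Third, the phase-2 key exchange now involves a variable but polynomial number of players, so its length $m(n)$ is a polynomial in $j$ as well as $n$, and the seed in phase~3 need only be shared among $j$'s neighbors. With these substitutions every phase of $\sigma^{\NE,\ell}$ remains computable in polynomial time.

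It then remains to check that the analysis of Theorem~\ref{perfectEqExist} transfers, and this is where the only real work lies. Lemmas~\ref{derandomize} and~\ref{negligible} depend only on the payoff bounds and the discount factor, not on the representation of the game, so they apply immediately. Lemma~\ref{psfMinmax} uses only the multi-instance security of the PRF $\PS$ and the multi-message multi-key security of the encryption scheme, both insensitive to graphical structure; the distinguishers it builds still simulate only polynomially many encryptions and PRF instances, since the number of participating players is polynomial. The main obstacle is Lemma~\ref{alwaysDeviate}: I must verify that the well-behaved-TM reduction still holds. That argument is entirely local to the deviating player $i$---it re-encodes histories and memory states and simulates the deviating machine $M$ from shifted histories, using only that $\vec{M}^{\sigma^{\NE,\ell}}_{-i}$ is polynomial-time and that the relevant histories have polynomial length. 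Both facts remain true in the graphical setting by the previous paragraph, so the same construction yields a $(G(h),f,n)$-well-behaved $M'$ with $R(M'),|M'|\le g(R(M))$, where the polynomial $g$ is now permitted to depend on $d$ rather than $c$. Given Lemma~\ref{alwaysDeviate}, the case analysis over phase-1, phase-2, and phase-3 histories in Theorem~\ref{perfectEqExist} applies verbatim, showing that $\vec{M}^{\sigma^{\NE,\ell}}$ is a subgame-perfect $(1/q)$-equilibrium.
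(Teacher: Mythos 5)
Your proposal is correct and takes essentially the same approach as the paper, whose own justification of this corollary is a one-line appeal to the same two substitutions used for Corollary~\ref{VarPlayerEqExist} (a polynomial-support correlated equilibrium via \cite{jiang2011polynomial}, and the degree-$d$-local punishment strategies). Your additional verification that Lemmas~\ref{derandomize}, \ref{negligible}, \ref{psfMinmax}, and \ref{alwaysDeviate} transfer is left implicit in the paper but is carried out correctly.
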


%joe1*: Do we want a conclusion?  I guess it's not critical.
%lior1: Not sure there is anything interesting to write here. In the conference papers we decided to remove that.

%\section{Conclusions}

%We have shown that computing an $\epsilon$-NE for an infinitely-repeated
%game is indeed easier than computing one in a normal-form game.
%Our techniques use threats and
%punishment much in the same way that they are used to prove 
%the Folk Theorem.   However, there is a new twist to our results. 
%We must assume that players are computationally bounded, and we use some
%cryptographic techniques that rely on standard cryptographic assumptions.
%
%Our paper leaves many interesting open problems for future research:
%\begin{itemize}
%\item Our results are restricted to games with a fixed number of players. Can similar ideas be used when the number of players is part of the input?
%\item There are versions of the Folk Theorem that give
%\emph{subgame-perfect} equilibria (see for example
%\cite{fudenberg1986folk}). Can we find such an equilibrium in polynomial
%time? 
%\item Our result is based on cryptographic hardness assumptions. Are
%these assumptions a necessary condition for an efficient computation of
%a repeated game equilibrium? 
%\end{itemize}

%% Acknowledgments
\section{Acknowledgments}
Joseph Halpern and Lior Seeman are supported in part by NSF grants
IIS-0911036 and CCF-1214844, by	AFOSR	grant	FA9550-08-1-0266,	
by	ARO grant W911NF-14-1-0017, and by the Multidisciplinary
University Research Initiative (MURI) program administered by the
AFOSR under grant FA9550-12-1-0040.
%lior31: added.
Lior Seeman is partially supported by a grant from the Simons Foundation \#315783.  
Rafael Pass is supported in part by an Alfred P. Sloan Fellowship, a
Microsoft Research Faculty Fellowship, 
NSF Awards CNS-1217821 and CCF-1214844, NSF CAREER Award CCF-0746990,
AFOSR YIP Award FA9550-10-1-0093, and DARPA and AFRL under contract
FA8750-11-2-0211. 
The views and conclusions contained in this document
are those of the authors 
and should not be interpreted as representing the official policies,
either expressed or implied, of the Defense Advanced Research Projects
Agency or the US Government.

% Bibliography
\bibliographystyle{chicagor}
\bibliography{TruthBehindTheMyth,joe,z}

\begin{thebibliography}{}

\bibitem[\protect\citeauthoryear{Andersen and Conitzer}{Andersen and
  Conitzer}{2013}]{fast2013Andersen}
Andersen, G. and V.~Conitzer (2013).
\newblock Fast equilibrium computation for infinitely repeated games.
\newblock In {\em Twenty-Seventh AAAI Conference on Artificial Intelligence}.

\bibitem[\protect\citeauthoryear{Aumann and Shapley}{Aumann and
  Shapley}{1994}]{AS94}
Aumann, R. and L.~Shapley (1994).
\newblock Long-term competition—a game-theoretic analysis.
\newblock In N.~Megiddo (Ed.), {\em Essays in Game Theory}, pp.\  1--15.
  Springer New York.

\bibitem[\protect\citeauthoryear{Borgs, Chayes, Immorlica, Kalai, Mirrokni, and
  Papadimitriou}{Borgs et~al.}{2010}]{borgs2010myth}
Borgs, C., J.~Chayes, N.~Immorlica, A.~Kalai, V.~Mirrokni, and C.~Papadimitriou
  (2010).
\newblock The myth of the folk theorem.
\newblock {\em Games and Economic Behavior\/}~{\em 70\/}(1), 34--43.

\bibitem[\protect\citeauthoryear{Chen and Deng}{Chen and
  Deng}{2006}]{chen2006settling}
Chen, X. and X.~Deng (2006).
\newblock Settling the complexity of two-player nash equilibrium.
\newblock In {\em Proc.~47th IEEE Symposium on Foundations of Computer
  Science}, pp.\  261--272.

\bibitem[\protect\citeauthoryear{Chen, Deng, and Teng}{Chen
  et~al.}{2006}]{chen2006computing}
Chen, X., X.~Deng, and S.~Teng (2006).
\newblock Computing {N}ash equilibria: Approximation and smoothed complexity.
\newblock In {\em Proc.~47th IEEE Symposium on Foundations of Computer
  Science}, pp.\  603--612.

\bibitem[\protect\citeauthoryear{Daskalakis, Goldberg, and
  Papadimitriou}{Daskalakis et~al.}{2006}]{daskalakis2006complexity}
Daskalakis, C., P.~Goldberg, and C.~Papadimitriou (2006).
\newblock The complexity of computing a nash equilibrium.
\newblock In {\em Proc.~38th ACM Symposium on Theory of Computing}, pp.\
  71--78.

\bibitem[\protect\citeauthoryear{Diffie and Hellman}{Diffie and
  Hellman}{1976}]{diffie1976new}
Diffie, W. and M.~Hellman (1976).
\newblock New directions in cryptography.
\newblock {\em IEEE Transactions on Information Theory\/}~{\em 22\/}(6),
  644--654.

\bibitem[\protect\citeauthoryear{Dodis, Halevi, and Rabin}{Dodis
  et~al.}{2000}]{dodis2000cryptographic}
Dodis, Y., S.~Halevi, and T.~Rabin (2000).
\newblock A cryptographic solution to a game theoretic problem.
\newblock In {\em CRYPTO 2000: 20th International Cryptology Conference}, pp.\
  112--130.

\bibitem[\protect\citeauthoryear{Fudenberg and Maskin}{Fudenberg and
  Maskin}{1986}]{fudenberg1986folk}
Fudenberg, D. and E.~Maskin (1986).
\newblock The folk theorem in repeated games with discounting or with
  incomplete information.
\newblock {\em Econometrica\/}~{\em 54\/}(3), 533--554.

\bibitem[\protect\citeauthoryear{Goldreich}{Goldreich}{2001}]{goldreichFound}
Goldreich, O. (2001).
\newblock {\em Foundation of Cryptography, Volume I Basic Tools}.

\bibitem[\protect\citeauthoryear{Goldreich, Goldwasser, and Micali}{Goldreich
  et~al.}{1986}]{goldreich1986construct}
Goldreich, O., S.~Goldwasser, and S.~Micali (1986).
\newblock How to construct random functions.
\newblock {\em Journal of the ACM\/}~{\em 33\/}(4), 792--807.

\bibitem[\protect\citeauthoryear{Goldwasser and Micali}{Goldwasser and
  Micali}{1984}]{goldwasser1984probabilistic}
Goldwasser, S. and S.~Micali (1984).
\newblock Probabilistic encryption.
\newblock {\em Journal of Computer and System Sciences\/}~{\em 28\/}(2),
  270--299.

\bibitem[\protect\citeauthoryear{Gossner}{Gossner}{1998}]{gossner1998repeated}
Gossner, O. (1998).
\newblock {\em Repeated games played by cryptographically sophisticated
  players}.
\newblock Center for Operations Research \& Econometrics. Universit{\'e}
  catholique de Louvain.

\bibitem[\protect\citeauthoryear{Gossner}{Gossner}{2000}]{gossner2000sharing}
Gossner, O. (2000).
\newblock Sharing a long secret in a few public words.
\newblock Technical report, THEMA (TH{\'e}orie Economique, Mod{\'e}lisation et
  Applications), Universit{\'e} de Cergy-Pontoise.

\bibitem[\protect\citeauthoryear{Gradwohl, Livne, and Rosen}{Gradwohl
  et~al.}{2013}]{GLR13}
Gradwohl, R., N.~Livne, and A.~Rosen (2013, January).
\newblock Sequential rationality in cryptographic protocols.
\newblock {\em ACM Trans. Econ. Comput.\/}~{\em 1\/}(1), 2:1--2:38.

\bibitem[\protect\citeauthoryear{Halpern and Pass}{Halpern and
  Pass}{2013}]{HP13}
Halpern, J.~Y. and R.~Pass (2013).
\newblock Sequential equilibrium in computational games.
\newblock In {\em Proc.~ 23rd International Joint Conference on Artificial
  Intelligence (IJCAI '13)}, pp.\  171--176.

\bibitem[\protect\citeauthoryear{H{\aa}stad, Impagliazzo, Levin, and
  Luby}{H{\aa}stad et~al.}{1999}]{haastad1999pseudorandom}
H{\aa}stad, J., R.~Impagliazzo, L.~A. Levin, and M.~Luby (1999).
\newblock A pseudorandom generator from any one-way function.
\newblock {\em SIAM Journal on Computing\/}~{\em 28\/}(4), 1364--1396.

\bibitem[\protect\citeauthoryear{Hoeffding}{Hoeffding}{1963}]{hoeffding1963probability}
Hoeffding, W. (1963).
\newblock Probability inequalities for sums of bounded random variables.
\newblock {\em Journal of the American Statistical Association\/}~{\em
  58\/}(301), 13--30.

\bibitem[\protect\citeauthoryear{Jiang and Leyton-Brown}{Jiang and
  Leyton-Brown}{2011}]{jiang2011polynomial}
Jiang, A.~X. and K.~Leyton-Brown (2011).
\newblock Polynomial-time computation of exact correlated equilibrium in
  compact games.
\newblock In {\em Proceedings of the 12th ACM conference on Electronic
  commerce}, pp.\  119--126. ACM.

\bibitem[\protect\citeauthoryear{Kearns, Littman, and Singh}{Kearns
  et~al.}{2001}]{KLS01}
Kearns, M., M.~L. Littman, and S.~P. Singh (2001).
\newblock Graphical models for game theory.
\newblock In {\em Proc.~Seventeenth Conference on Uncertainty in Artificial
  Intelligence (UAI 2001)}, pp.\  253--260.

\bibitem[\protect\citeauthoryear{Kol and Naor}{Kol and Naor}{2008}]{KN08a}
Kol, G. and M.~Naor (2008).
\newblock Games for exchanging information.
\newblock In {\em Proc.~40th Annual ACM Symposium on Theory of Computing (STOC
  '08)}, pp.\  423--432.

\bibitem[\protect\citeauthoryear{Kreps and Wilson}{Kreps and
  Wilson}{1982}]{KW82}
Kreps, D.~M. and R.~B. Wilson (1982).
\newblock Sequential equilibria.
\newblock {\em Econometrica\/}~{\em 50}, 863--894.

\bibitem[\protect\citeauthoryear{Lehrer}{Lehrer}{1991}]{lehrer1991internal}
Lehrer, E. (1991).
\newblock Internal correlation in repeated games.
\newblock {\em International Journal of Game Theory\/}~{\em 19\/}(4), 431--456.

\bibitem[\protect\citeauthoryear{Littman and Stone}{Littman and
  Stone}{2005}]{littman2005polynomial}
Littman, M.~L. and P.~Stone (2005).
\newblock A polynomial-time nash equilibrium algorithm for repeated games.
\newblock {\em Decision Support Systems\/}~{\em 39\/}(1), 55--66.

\bibitem[\protect\citeauthoryear{Neyman}{Neyman}{1985}]{neyman1985}
Neyman, A. (1985).
\newblock Bounded complexity justifies cooperation in the finitely repeated
  prisoners' dilemma.
\newblock {\em Economics Letters\/}~{\em 19\/}(3), 227--229.

\bibitem[\protect\citeauthoryear{Osborne and Rubinstein}{Osborne and
  Rubinstein}{1994}]{OR94}
Osborne, M.~J. and A.~Rubinstein (1994).
\newblock {\em A Course in Game Theory}.
\newblock \chicagoraddresspub{Cambridge, Mass.: }MIT Press.

\bibitem[\protect\citeauthoryear{Papadimitriou and Yannakakis}{Papadimitriou
  and Yannakakis}{1994}]{PY94}
Papadimitriou, C.~H. and M.~Yannakakis (1994).
\newblock On complexity as bounded rationality.
\newblock In {\em Proc.~26th ACM Symposium on Theory of Computing}, pp.\
  726--733.

\bibitem[\protect\citeauthoryear{Rivest, Shamir, and Adleman}{Rivest
  et~al.}{1978}]{rivest1978method}
Rivest, R.~L., A.~Shamir, and L.~Adleman (1978).
\newblock A method for obtaining digital signatures and public-key
  cryptosystems.
\newblock {\em Communications of the ACM\/}~{\em 21\/}(2), 120--126.

\bibitem[\protect\citeauthoryear{Rubinstein}{Rubinstein}{1979}]{Ru79}
Rubinstein, A. (1979).
\newblock Equilibrium in supergames with the overtaking criterion.
\newblock {\em Journal of Economic Theory\/}~{\em 21\/}(1), 1--9.

\bibitem[\protect\citeauthoryear{Rubinstein}{Rubinstein}{1986}]{rubinstein1986finite}
Rubinstein, A. (1986).
\newblock Finite automata play the repeated prisoner's dilemma.
\newblock {\em Journal of Economic Theory\/}~{\em 39\/}(1), 83--96.

\bibitem[\protect\citeauthoryear{Selten}{Selten}{1965}]{Selten65}
Selten, R. (1965).
\newblock Spieltheoretische behandlung eines oligopolmodells mit
  nachfragetr\"{a}gheit.
\newblock {\em Zeitschrift f\"{u}r Gesamte Staatswissenschaft\/}~{\em 121},
  301--324 and 667--689.

\bibitem[\protect\citeauthoryear{Urbano and Vila}{Urbano and
  Vila}{2004}]{urbano2004unmediated}
Urbano, A. and J.~Vila (2004).
\newblock Unmediated communication in repeated games with imperfect monitoring.
\newblock {\em Games and Economic Behavior\/}~{\em 46\/}(1), 143--173.

\bibitem[\protect\citeauthoryear{Urbano and Vila}{Urbano and
  Vila}{2002}]{urbanoVilla2002}
Urbano, A. and J.~E. Vila (2002).
\newblock Computational complexity and communication: Coordination in
  two-player games.
\newblock {\em Econometrica\/}~{\em 70\/}(5), 1893--1927.

\end{thebibliography}

% Appendix
\appendix
\section*{APPENDIX}
\section{Multi-Instance PRFs}
\label{section:appA}
In this section, we show that for any family of PRF, even polynomially many
random members of it are indistinguishable from polynomially many truly
random functions. 
\begin{lemma}
For all polynomials $q$, if $\{f_s: \{0,1\}^{|s|}\to
\{0,1\}^{|s|}\}_{s\in\{0,1\}^*}$ is a pseudorandom function ensemble,
then the  ensemble $F^q=\{F_n^1,\ldots, F_n^{q(n)}\}_{n\in\mathbb{N}}$
where, for all $i$, $F_n^i$ is uniformly distributed over the multiset
$\{f_s\}_{s\in\{0,1\}^n}$, is computationally indistinguishable from
$H^q=\{H_n^1,\ldots,H_n^{q(n)}\}_{n\in\mathbb{N}}$. 
\end{lemma}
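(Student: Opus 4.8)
The plan is to prove this by a standard hybrid argument, reducing multi-instance security to the single-instance pseudorandomness guaranteed by the definition of a PRF. For each $n$ and each $j \in \{0,1,\ldots,q(n)\}$, I would define an intermediate ensemble $Z_n^{(j)}$ consisting of $q(n)$ functions in which the first $j$ are drawn independently and uniformly from the multiset $\{f_s\}_{s\in\{0,1\}^n}$ (so they are pseudorandom) and the remaining $q(n)-j$ are independent truly random functions on $n$-bit strings. By construction $Z_n^{(0)} = \{H_n^1,\ldots,H_n^{q(n)}\}$ and $Z_n^{(q(n))} = \{F_n^1,\ldots,F_n^{q(n)}\}$, so $Z^{(0)}$ is distributed as $H^q$ and $Z^{(q)}$ as $F^q$, and two neighboring hybrids $Z_n^{(j)}$ and $Z_n^{(j+1)}$ differ only in whether coordinate $j+1$ is truly random or pseudorandom.

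Suppose, for contradiction, that some non-uniform PPT oracle distinguisher $D$ has non-negligible advantage $\mu$ in distinguishing $F^q$ from $H^q$. By the triangle inequality, for every $n$ on which $D$ has advantage at least $\mu(n)$ there is an index $j=j(n) \in \{0,\ldots,q(n)-1\}$ such that $D$ distinguishes $Z_n^{(j)}$ from $Z_n^{(j+1)}$ with advantage at least $\mu(n)/q(n)$, which is again non-negligible (division by the polynomial $q$ preserves non-negligibility). Since we work in the non-uniform model, I would simply hardwire the best such index $j(n)$ into the advice of the single-instance attacker, avoiding any need to guess it.

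I would then build a single-instance oracle distinguisher $D'$ that, given oracle access to a function $f$ which is either a uniformly chosen $f_s$ or a truly random function, simulates $D$'s view so that the oracle plays the role of coordinate $j+1$. Concretely, $D'$ samples seeds $s_1,\ldots,s_j$ itself and answers $D$'s queries to the first $j$ coordinates using $f_{s_1},\ldots,f_{s_j}$ (efficiently, by the ``easy to compute'' property); it forwards all queries to coordinate $j+1$ to its own oracle $f$; and it answers queries to the last $q(n)-j-1$ coordinates by \emph{lazy sampling}, maintaining a table and returning a fresh uniform $n$-bit string for each new input while repeating stored answers on repeated inputs, so that these coordinates are distributed exactly as independent truly random functions. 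Because $D$ makes only polynomially many queries and each $f_s$ is poly-time computable, $D'$ runs in polynomial time. When $f$ is a random $f_s$, $D'$ perfectly simulates $Z_n^{(j+1)}$, and when $f$ is truly random it perfectly simulates $Z_n^{(j)}$; hence $D'$ distinguishes a random member of the PRF family from a truly random function with advantage $\mu(n)/q(n)$, contradicting the pseudorandomness clause of the definition of a PRF. The main point requiring care is the lazy simulation of the truly random coordinates (ensuring consistency across repeated queries while staying within polynomial time) together with the bookkeeping that the $1/q(n)$ loss keeps the advantage non-negligible; the non-uniform model conveniently removes any difficulty in selecting the hybrid index $j(n)$.
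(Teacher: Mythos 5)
Your proof is correct and follows essentially the same hybrid argument as the paper's, including the simulation of the truly random coordinates by lazy sampling and the pseudorandom ones via the ``easy to compute'' property. The only cosmetic difference is that you hardwire the best hybrid index $j(n)$ into the non-uniform advice, whereas the paper chooses the index uniformly at random and uses the telescoping average over all hybrids; both incur the same $1/q(n)$ loss in advantage.
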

\begin{proof}
Assume for contradiction that the ensembles are distinguishable. This means there exist a polynomial $q$, a PPT $D$, and a polynomial $p$ such that for infinitely many $n$'s 
$$|Pr[D(1^n,(H_n^1,\ldots,H_n^{q(n)}))=1]-Pr[D(1^n,(F_n^1,\ldots,F_n^{q(n)}))=1]|>\frac{1}{p(n)}.$$

For each $n$, let
$T_n^i=(1^n,(H_n^1,\ldots,H_n^{i-1},F_n^{i},\ldots,F_n^{q(n)}))$.  
We can now describe a PPT $D'$ that distinguishes
$\{F_n\}_{n\in\mathbb{N}}$ and $\{H_n\}_{n\in\mathbb{N}}$ for infinitely
many $n$'s. First notice that a PPT can easily simulate polynomially many
oracle queries to both a truly random function and to a member of
$F_n$. So $D'$ on input $(1^n,X)$ randomly chooses
$j\in\{1,\ldots,q(n)\}$ and calls $D$ with input
$(1^n,(I^1,\ldots,I^{j-1},X,J^{j+1},\ldots,J^{q(n)}))$, where it
simulates a query to $I_k$ as a query to a random member of $H_n$, and a
query to $J_k$ as a query to a random member of $F_n$.  (Notice that since
$D$ is a PPT, it can make only polynomially many oracle queries to any of
the functions, which can be easily simulated). Whenever $D$
makes an oracle query to $X$, $D'$ makes an oracle query to $X$, and uses
its answer as the answer to $D$. 
When $D$ terminates, $D'$ outputs the same value as $D$. 

Now notice that if $X$ is $H_n$, then the input to $D$ is $T_n^j$, while
if $X$ is $F_n$, then the input to $D$ is $T_n^{j+1}$.  Thus, 
$Pr[D'(1^n,H_n)=1]=\frac{1}{q(n)}\sum_{i=1}^{q(n)}Pr[D(T_n^{i+1})=1]$,
and \\
\mbox{$Pr[D'(1^n,F_n)=1]=\frac{1}{q(n)}\sum_{i=1}^{q(n)}Pr[D(T_n^{i})=1]$}.
It follows that
\begin{align*}
|Pr[D'(1^n,H_n)=1]-Pr[D'(1^n,F_n)=1]|&=\frac{1}{q(n)}|\sum_{i=1}^{q(n)}Pr[D(T_n^{i+1})=1]-Pr[D(T_n^{i})=1]|\\
&=\frac{1}{q(n)}|Pr[D(T_n^{q(n)+1})=1]-Pr[D(T_n^{1})=1]|\\
&>\frac{1}{q(n)p(n)},
\end{align*}
where the last inequality is due to the fact that
$T_n^{q(n)+1}=(1^n,(H_n^1,\ldots,H_n^{q(n)}))$ and
\mbox{$T_n^{1}=(1^n,(F_n^1,\ldots,F_n^{q(n)}))$}. 
But this means that for any such $n$, $D'$ can distinguish
$F=\{F_n\}_{n\in\mathbb{N}}$ and $H=\{H_n\}_{n\in\mathbb{N}}$ with 
non-negligible probability, and thus can do that for infinitely many $n$'s.
This is a contradiction to the assumption that $\{f_s: \{0,1\}^{|s|}\to
\{0,1\}^{|s|}\}_{s\in\{0,1\}^*}$ is a pseudorandom function ensemble. 
\end{proof}

\section{Multi-key Multi-Message Security}
\label{section:appB}
In this section, we show that any secure public-key encryption scheme is
also multi-key multi-message secure.

\begin{lemma}
If $(\Gen,\Enc,\Dec)$ is a secure public key encryption scheme, then it
is also multi-message multi-key secure. 
\end{lemma}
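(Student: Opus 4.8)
The plan is to prove this by a standard hybrid argument, reducing multi-message multi-key security to the single-message single-key security guaranteed by the hypothesis. First I would fix polynomials $f$ and $g$ and a purported $\PPT$ distinguisher $A=(A_1,A_2)$ separating $\{\text{IND-MULT}_0^{\Pi}(A,k,f,g)\}_k$ from $\{\text{IND-MULT}_1^{\Pi}(A,k,f,g)\}_k$ with non-negligible advantage. The multi-instance experiment produces a grid of $f(k)g(k)$ ciphertexts, where the $(i,j)$ entry is $\Enc_{pk_j}(m_b^i)$. I would order these $f(k)g(k)$ positions (say, lexicographically) and define hybrids $H_0,\ldots,H_{f(k)g(k)}$, where in $H_\ell$ the first $\ell$ positions encrypt the corresponding $m_1$-messages and the remaining positions encrypt the corresponding $m_0$-messages. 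By construction $H_0$ is exactly the $b=0$ experiment and $H_{f(k)g(k)}$ is exactly the $b=1$ experiment.

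Next, since $A$ distinguishes $H_0$ from $H_{f(k)g(k)}$ with advantage at least $1/p(k)$ for some polynomial $p$ and infinitely many $k$, a telescoping argument shows that some consecutive pair $(H_\ell,H_{\ell+1})$ is distinguished with advantage at least $1/(f(k)g(k)p(k))$, which remains non-negligible because $f$ and $g$ are polynomials. Working in the non-uniform model used throughout the paper, I would hardwire the index $\ell$ achieving this gap as advice (alternatively, one may select $\ell$ uniformly at random, losing only a polynomial factor in the advantage).

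I would then construct a single-instance adversary $B=(B_1,B_2)$ against the standard game that embeds the one challenge ciphertext into the single position where $H_\ell$ and $H_{\ell+1}$ differ. Concretely, $B_1$ receives a public key $pk$, places it in the key slot $j$ corresponding to position $\ell+1$, generates the remaining $g(k)-1$ key pairs itself, and runs $A_1$ on the full list of public keys to obtain all messages and side information $\tau$; it outputs the pair $(m_0^i,m_1^i)$ for the differing position together with side information recording the generated keys, all messages, and $\tau$. Upon receiving $\mathcal{C}^{\ast}=\Enc_{pk}(m_b^i)$, $B_2$ reconstructs the whole grid: it places $\mathcal{C}^{\ast}$ in the differing position, encrypts the $m_1$-messages at the already-switched positions and the $m_0$-messages at the not-yet-switched positions (which it can do itself, knowing every public key and every message), runs $A_2$ on this grid and $\tau$, and echoes its output. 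When $b=0$ the grid is distributed exactly as $H_\ell$ and when $b=1$ exactly as $H_{\ell+1}$, so $B$'s advantage equals the hybrid gap, contradicting single-key security.

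The main point to verify --- and the only real subtlety --- is that $B$ is genuinely $\PPT$: because $f$ and $g$ are polynomials, the grid has only polynomially many entries, so generating the $g(k)-1$ extra key pairs and performing the $f(k)g(k)-1$ simulated encryptions take polynomial time, and simulating $A_1,A_2$ adds only polynomial overhead. I would also check that the messages $B$ outputs lie in $\{0,1\}^k$ as the single-key game requires, and that the side-information channel $\tau$ provided in the definition suffices to pass the extra state (keys, messages, $\ell$) from $B_1$ to $B_2$; this is precisely the intended use of $\tau$ noted in the security definition.
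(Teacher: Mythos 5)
Your proposal is correct and follows essentially the same route as the paper's proof: a hybrid argument over the $f(k)g(k)$ ciphertext positions, with a telescoping sum showing some adjacent pair of hybrids is distinguishable, and a reduction that embeds the single challenge ciphertext into the differing position while generating all other keys and encryptions itself (the paper uses the random-index variant, which you also mention as an alternative to hardwiring $\ell$). The points you flag to verify---polynomial running time of the reduction, message lengths, and passing state through $\tau$---are exactly the details the paper's construction handles.
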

\begin{proof}
Assume for contradiction that 
$(\Gen,\Enc,\Dec)$ is a secure public key encryption scheme that is not
multi-message 
multi-key secure. Then there exist  polynomials
$f$ and $g$ and 
an adversary \mbox{$A=(A_1,A_2)$} such that 
$\{\text{IND-MULT}_0^{\Pi}(A, k,f,g)\}_k$ and $\{\text{IND-MULT}_1^{\Pi}(A, 
k,f,g)\}_k$ are distinguishable. That means there exist a PPT $D$ and a polynomial $p$ such that
$$|Pr[D(1^k,\{\text{IND-MULT}_0^{\Pi}(A, k,f,g)\})=1]-Pr[D(1^k,\{\text{IND-MULT}_1^{\Pi}(A, k,f,g)\})=1]|>\frac{1}{p(n)}.$$

Let $T^{\pi}_{i,j}(A,k,f,g)$ be the following PPT
algorithm: 
\begin{center}
\noindent\begin{tabular}[c]{rlcrl}
$T^{\pi}_{i,j}(A, k,f,g) :=$ & $(pk_1,sk_1)
 \leftarrow \Gen(1^k),\ldots (pk_{g(k)},sk_{g(k)}) \leftarrow \Gen(1^k),$ \\ 
 & $(m_0^1,\ldots,m_0^{f(k)}, m_1^1,\ldots,m_1^{f(k)}, \tau) \leftarrow A_1(1^k, pk_1,\ldots,pk_{g(k)})$\\
 & $\mathcal{C} \leftarrow \Enc_{pk_1}(m_0^1),\ldots, \Enc_{pk_{g(k)}}(m_0^1),$\\ 
&$\ldots,\Enc_{pk_1}(m_0^{j}),\ldots,\Enc_{pk_{i-1}}(m_0^{j}),\Enc_{pk_{i}}(m_1^{j}),\ldots \Enc_{pk_{g(k)}}(m_1^{j}),$\\
&$\ldots\Enc_{pk_1}(m_1^{f(k)}),\ldots, \Enc_{pk_{g(k)}}(m_1^{f(k)})$ \\
 & $o \leftarrow A_2(\mathcal{C},\tau)$ \\
 & $\text{Output } o$.
\end{tabular}
\end{center}

We now define an adversary $A'=(A_1',A_2')$, and show that 
$\{\text{IND}_0^{\Pi}(A', k,f,g)\}_k$ and $\{\text{IND}_1^{\Pi}(A',k,f,g)\}_k$
are not computationally indistinguishable.
$A'_1$ on input $(1^k,pk)$ first chooses \mbox{$i\in\{1,\ldots,g(k)\}$}
uniformly at random. It then generates \mbox{$g(k)-1$} random key 
pairs
$(pk_1,sk_1),\ldots,(pk_{i-1},sk_{i-1}),(pk_{i+1},sk_{i+1}),\ldots,
(pk_{g(k)},sk_{g(k)})$. It then calls $A_1$ with input
$(1^k,pk_1,\ldots,pk_{i-1},pk,pk_{i+1},\ldots,pk_{g(k)})$. After getting
$A_1$'s output \mbox{$M=(m_0^1,\ldots,m_0^{f(k)}, m_1^1,\ldots,m_1^{f(k)},
\tau)$}, $A_1'$ chooses $j\in\{1,\ldots,f(n)\}$ uniformly at random, 
and returns as its output
$(m_0^j,m_1^j,(i,j,pk,pk_1,sk_1,\ldots,pk_{g(k)},sk_{g(k)},M))$. 

$A_2'$ on input
$(\mathcal{C},(i,j,pk,pk_1,sk_1,\ldots,pk_{g(k)},sk_{g(k)},M))$ constructs
input $\mathcal{C}'$ for $A_2$ by first appending the  encryptions of messages
$m_0^1\ldots,m_0^{j-1}$ with all the keys, then appending the encryption of $m_0^j$ with
keys $pk_1,\ldots,pk_i$ and then appends $\mathcal{C}$. It then appends the
encryption of 
$m_1^j$ with keys $pk_{i+2},\ldots,pk_{g(k)}$ and also the encryption of
the messages $m_1^{j+1},\ldots,m_1^{f(k)}$ with each of the keys. It then 
outputs $A_2(\mathcal{C}',\tau)$. 
If $\mathcal{C}$ is the encryption of $m^0_j$ with key $pk$,
then this algorithm is identical to $T^{\pi}_{i+1,j}(A,k,f,g)$ (if
$i=g(k)$ then by $T^{\pi}_{i+1,j}$ we mean $T^{\pi}_{1,j+1}$; we use
similar conventions elsewhere), while if it is the encryption of $m^1_j$
with key  
$pk$, then the algorithm is identical to  $T^{\pi}_{i,j}(A,k,f,g)$. 

We claim that $D$ can distinguish $\{\text{IND}_0^{\Pi}(A',
k,f,g)\}_k$ and $\{\text{IND}_1^{\Pi}(A',k,f,g)\}_k$. 
Note that  $$Pr[D(1^k,\{\text{IND}_0^{\Pi}(A',
k,f,g)\})=1]=\frac{1}{g(k)f(k)}\sum_{j=1}^{f(k)}\sum_{i=1}^{g(k)}Pr[D(1^k,T^{\pi}_{i+1,j}(A,k,f,g))=1]$$
and $$Pr[D(1^k,\{\text{IND}_1^{\Pi}(A',
k,f,g)\})=1]=\frac{1}{g(k)f(k)}\sum_{j=1}^{f(k)}\sum_{i=1}^{g(k)}Pr[D(1^k,T^{\pi}_{i,j}(A,k,f,g))=1].$$ 
Thus,
$$\begin{array}{ll}
&|Pr[D(1^k,\{\text{IND}_0^{\Pi}(A',
k,f,g)\})=1]-Pr[D(1^k,\{\text{IND}_1^{\Pi}(A', k,f,g)\})=1]|\\ 
=&
\frac{1}{g(k)f(k)}|\sum_{j=1}^{f(k)}\sum_{i=1}^{g(k)}(Pr[D(1^k,T^{\pi}_{i+1,j}(A,k,f,g))=1]-Pr[D(1^k,T^{\pi}_{i,j}(A,k,f,g))=1])\\ 
=&\frac{1}{g(k)f(k)}|Pr[D(1^k,\{\text{IND-MULT}_0^{\Pi}(A, k,f,g)\})=1]-
Pr[D(1^k,\{\text{IND-MULT}_1^{\Pi}(A, k,f,g)\})=1]|\\
>&\frac{1}{g(k)f(k)p(k)},
\end{array}$$
where the next-to-last line follows because
\mbox{$T^{\pi}_{1,1}(A, k,f,g)=\text{IND-MULT}_1^{\Pi}(A, k,f,g)$} and
\mbox{$T^{\pi}_{g(k)+1,f(k)}(A, k,f,g)=\text{IND-MULT}_0^{\Pi}(A, k,f,g)$}. Thus, we have a
contradiction to the fact that the encryption scheme is secure. 
\end{proof}

\medskip

\end{document}